\newtheorem{theorem}{Theorem}
\newtheorem{lemma}{Lemma}
\theoremstyle{definition}
\newtheorem*{example*}{Example}
\newtheorem*{remark*}{Remark}
\newtheorem{remark}{Remark}
\def\pe{p_e}
\newcommand{\RealF}{\mathds{R}}
\newcommand{\Expt}{\mathds{E}}
\newcommand{\eps}{\varepsilon}
\newcommand{\wt}[1]
{\widetilde{#1}}
\def\ind{\mathds{1}}
\def\wt{\widetilde}
\newcommand{\dfn}{\triangleq}
\def\mod1{\;\mathrm{mod}\;1}
\begin{document}

\title{A Simple Proof for the Optimality of Randomized Posterior Matching}
\author{Ofer~Shayevitz and Meir~Feder\thanks{The authors are with the Department of EE--Systems, Tel Aviv University, Tel Aviv, Israel \{ofersha@eng.tau.ac.il, meir@eng.tau.ac.il\}. The work of O. Shayevitz was supported by the Israel Science Foundation, grant no. 1367/14.}}

\date{}

\maketitle

\thispagestyle{empty}

\begin{abstract}
Posterior matching (PM) is a sequential horizon-free feedback communication scheme introduced by the authors, who also provided a rather involved optimality proof showing it achieves capacity for a large class of memoryless channels. Naghshvar \textit{et al} considered a non-sequential variation of PM with a fixed number of messages and a random decision-time, and gave a simpler proof establishing its optimality via a novel Shannon-Jensen divergence argument. Another simpler optimality proof was given by Li and El Gamal, who considered a fixed-rate fixed block-length variation of PM with an additional randomization. Both these works also provided error exponent bounds. However, their simpler achievability proofs apply only to discrete memoryless channels, and are restricted to a non-sequential setup with a fixed number of messages. In this paper, we provide a short and transparent proof for the optimality of the fully sequential horizon-free PM scheme over general memoryless channels. Borrowing the key randomization idea of Li and El Gamal, our proof is based on analyzing the random walk behavior of the shrinking posterior intervals induced by a reversed iterated function system (RIFS) decoder. 
\end{abstract}

\section{Introduction}\label{sec:introduction}
Posterior Matching (PM) is a simple and general feedback communication scheme introduced by the authors, who also showed it achieves capacity for a large class of memoryless channels, including discrete alphabets, continuous alphabets, and mixtures thereof \cite{posterior_matching_IT,posterior_matching,posterior_matching_isit08}. One appealing feature of the PM scheme is that it is horizon-free and sequential, in the sense that the transmitter may send an infinite sequence of bits, and the receiver can decide to stop at every instant $n$; the receiver is then able decode roughly $nC$ bits from the prefix of this sequence with vanishing error probability, where $C$ is the capacity of the channel. Alternatively, the receiver is also able to decode the bits on the fly as soon as they become reliable enough. As argued in \cite{posterior_matching_IT}, PM can easily be converted to the more traditional settings where the number of messages and/or the horizon are fixed.  

While heuristic arguments for the optimality of PM are simple and appealing (see \cite{posterior_matching_IT}, and going back to the special case of the Horstein scheme \cite{Horstein-report,horstein}), the original optimality proof in \cite{posterior_matching_IT} is quite involved and nontransparent. Coleman \cite{coleman09} studied the PM scheme from a novel stochastic control and Lyopanov exponent perspective, and provided a conceptually cleaner approach for its analysis. Naghshvar \textit{et al} \cite{naghshvar2013extrinsic} considered a non-sequential variation of PM restricted to discrete memoryless channels (DMCs), where the number of messages is fixed but the decision time (horizon) is random. Introducing a novel Shannon-Jensen divergence, they provided a simpler proof showing that their scheme achieves the capacity of any DMC. Li and El Gamal \cite{LiElGamal:2015} considered the same setting but with a fixed horizon. They described a randomized variation of PM and provided a simpler proof showing it achieves the capacity of any DMC. A key ingredient in their scheme was a random shift applied to the message point after each PM iteration, which circumvented some of the analysis obstacles. Both \cite{naghshvar2013extrinsic} and \cite{LiElGamal:2015} also provide error exponent results. 

In this paper, we adopt the random shift idea of Li and El Gamal, and consider a randomized version of the fully sequential horizon-free PM scheme. We provide a short and transparent optimality proof, showing that this scheme achieves the capacity for a very large class of memoryless channels, including all DMC and also many continuous alphabet and mixed alphabet channels. Our proof is based on analyzing the random-walk behavior of a reversed iterated function system (RIFS) decoder introduced in \cite{posterior_matching_IT}. Unlike the deterministic PM scheme in \cite{posterior_matching_IT}, the combination of RIFS decoding and the random shift operation facilitates a much cleaner analysis and avoids the problem of fixed points that was a major obstacle in the original proof.

\section{Preliminaries}
\subsection{Definitions and Basic Lemmas}
Recall that a real-valued stochastic process $T_n$ is called a submartingale if $\Expt(T_{n+1}\mid T^n) \geq T_n$ for any $n$. The following result is well known. 
\begin{lemma}[Martingale Convergence Theorem \cite{doob1953stochastic}]\label{lem:martingale_conv}
Let $T_n$ be a submartingale. If $\sup_n\Expt|T_n| < \infty$ then $T_n$ convergence a.s. to some r.v. $T$ and $\Expt|T| < \infty$. 
\end{lemma}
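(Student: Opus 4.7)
The plan is to use Doob's classical upcrossing argument. For any real numbers $a<b$, let $U_n[a,b]$ denote the number of upcrossings of the interval $[a,b]$ by the sequence $T_1,\dots,T_n$, that is, the number of disjoint time intervals during which $T_k$ starts at or below $a$ and later reaches or exceeds $b$. The first task will be to establish Doob's upcrossing inequality, namely
\[
(b-a)\,\Expt U_n[a,b] \leq \Expt(T_n - a)^+ - \Expt(T_1 - a)^+.
\]
I would prove this by defining a predictable $\{0,1\}$-valued strategy $H_k$ that ``buys'' when the process drops to $a$ and ``sells'' when it reaches $b$, and by verifying that the transformed process $\sum_{k\leq n} H_k(T_k - T_{k-1})$ has nonnegative expectation because $T_n$ is a submartingale. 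Each completed upcrossing contributes at least $b-a$ to this sum, and the only loss comes from an incomplete final upcrossing, which is bounded below by $-(T_n-a)^+$. This yields the inequality.

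Given the hypothesis $\sup_n \Expt|T_n| < \infty$, the right-hand side is bounded uniformly in $n$, so by monotone convergence $\Expt U_\infty[a,b] < \infty$, hence $U_\infty[a,b] < \infty$ almost surely. The second step is to deduce almost sure convergence: on the event $\{\liminf T_n < \limsup T_n\}$ one can choose rationals $a<b$ with $\liminf T_n < a < b < \limsup T_n$, which forces $U_\infty[a,b] = \infty$. Taking the countable union over all rational pairs $(a,b)$ shows this event has probability zero, so $T_\infty \defn \lim_n T_n$ exists in $[-\infty,\infty]$ almost surely.

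Finally, to see that $T_\infty$ is finite with $\Expt|T_\infty|<\infty$, I would invoke Fatou's lemma:
\[
\Expt|T_\infty| = \Expt\liminf_n |T_n| \leq \liminf_n \Expt|T_n| \leq \sup_n \Expt|T_n| < \infty,
\]
which also forces $|T_\infty|<\infty$ almost surely.

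The main obstacle is the upcrossing inequality itself; the rest is essentially bookkeeping. The subtle point there is choosing the predictable strategy correctly and handling the ``incomplete final upcrossing'' carefully so that the bound degrades by at most $(T_n - a)^+$ rather than something uncontrolled.
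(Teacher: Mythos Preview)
Your outline is the classical Doob upcrossing proof and is correct. Note, however, that the paper does not supply its own proof of this lemma: it is simply quoted as the well-known Martingale Convergence Theorem with a citation to Doob~\cite{doob1953stochastic}, and is used as a black box later in the proof of Lemma~\ref{lem:SDiverges}. So there is nothing to compare against; what you have written is precisely the standard textbook argument one would find in the cited reference.

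One minor remark on your upcrossing inequality: the exact constant on the right-hand side depends on conventions (some texts give $\Expt(T_n-a)^+$ rather than $\Expt(T_n-a)^+ - \Expt(T_1-a)^+$), but either form suffices here since all that is needed is a bound of the shape $\Expt U_n[a,b] \leq C(\sup_n \Expt|T_n| + |a|)/(b-a)$.
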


Let $g:[0,1]\mapsto\mathbb{R}$ be a Lebesgue measurable function. With some abuse of notations, we naturally extend $g$ to operate on subsets of its domain in an element-wise fashion, namely $g(A) \dfn \cup_{x\in A}\{g(x)\}$ for any set $A\subseteq [0,1]$. We write $|A|$ for the Lebesgue measure of the set $A$, whenever the former exists. Define the $\lambda$-smoothed derivative of $g$ to be
\begin{align*}
  D_\lambda [g(x)] \dfn \frac{1}{\lambda}\left|g\left(\left[x -\tfrac{\lambda}{2}, x+\tfrac{\lambda}{2}\right] \mod1\right)\right| , 
\end{align*} 
where $t \mod1 \dfn t-\lfloor t\rfloor$ is the modulo $1$ operation.\footnote{One may equivalently identify $[0,1)$ with the circle $\mathbb{R}\slash \mathbb{Z}$, in lieu of the modulo notation. The cyclic definition of the smoothed derivative takes care of what happens near the edges of the unit interval, and is essential for our purposes later due to the random shift. The definition (and associated results in this section) work with minor adaptations for any other interval domains (with the proper modulo) or when the domain is $\mathbb{R}$ (without the modulo).} Let 
\begin{align*} 
D[g(x)] \dfn \limsup_{\lambda\to 0}D_\lambda [g(x)] .
\end{align*}
The following lemma is easily verified. 
\begin{lemma}\label{lem:derivative}
  If $g(x)$ is differentiable at $x_0\in(0,1)$ with a derivative $g'(x_0)$, then $D[g(x_0)] = |g'(x_0)|$. Furthermore, if $g$ is absolutely continuous on $[0,1]$, then  
\begin{align*}
  D_\lambda [g(x)] = \Expt\left|g'\left((x+ Q_\lambda) \mod1 \right)\right| ,
\end{align*} 
where $Q_\lambda\sim \textrm{Unif}\left(\left[-\tfrac{\lambda}{2},\tfrac{\lambda}{2}\right]\right)$. 
\end{lemma}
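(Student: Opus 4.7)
The two parts are essentially independent and I would handle them separately; both reduce to standard real-analysis facts.

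For part~1, fix $x_0\in(0,1)$ and $\epsilon>0$. The definition of the derivative provides $\delta>0$ with $|g(x_0+h)-g(x_0)-g'(x_0)h|\le\epsilon|h|$ for all $|h|\le\delta$. For $\lambda<\min\{2\delta,\,2x_0,\,2(1-x_0)\}$ the interval $[x_0-\tfrac{\lambda}{2},x_0+\tfrac{\lambda}{2}]$ sits inside $(0,1)$, so the modulo is inactive. The two-sided linear bound above contains the image inside an interval of length at most $(|g'(x_0)|+\epsilon)\lambda$, giving $D_\lambda[g(x_0)]\le|g'(x_0)|+\epsilon$. For the matching lower bound, using continuity of $g$ on this interval (an implicit hypothesis for the claim to be nontrivial) the image is connected and contains the two endpoint values $g(x_0\pm\tfrac{\lambda}{2})$, hence
\[
\left|g\!\left([x_0-\tfrac{\lambda}{2},x_0+\tfrac{\lambda}{2}]\right)\right|\;\ge\;\left|g(x_0+\tfrac{\lambda}{2})-g(x_0-\tfrac{\lambda}{2})\right|\;\ge\;(|g'(x_0)|-\epsilon)\lambda,
\]
the last step by the triangle inequality applied to the first-order expansion. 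Sending $\lambda\downarrow 0$ and then $\epsilon\downarrow 0$ gives $D[g(x_0)]=|g'(x_0)|$.

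For part~2, I would unroll the expectation, using $Q_\lambda\sim\textrm{Unif}([-\tfrac{\lambda}{2},\tfrac{\lambda}{2}])$:
\[
\Expt\left|g'\!\left((x+Q_\lambda)\mod1\right)\right|\;=\;\frac{1}{\lambda}\int_{[x-\lambda/2,\,x+\lambda/2]\mod 1}\!\!|g'(y)|\,dy.
\]
The left-hand side $D_\lambda[g(x)]$ is, by definition, $\tfrac{1}{\lambda}$ times the Lebesgue measure of the image of $g$ on the same cyclic interval $I$. Thus the claim reduces to the change-of-variables identity $|g(I)|=\int_I|g'(y)|\,dy$, which after splitting $I$ into its (at most two) arcs on $[0,1]$ is the standard area formula for absolutely continuous functions combined with the fundamental theorem of calculus on each arc.

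The only genuinely delicate point is the change-of-variables step in part~2: the identity $|g(I)|=\int_I|g'|$ fails without monotonicity --- in general one only has the inequality $|g(I)|\le\int_I|g'|$, since the total variation of $g$ may exceed the length of its range on $I$. I expect this gap to be closed by an implicit structural assumption carried forward for the RIFS decoder, namely that the iterated maps in play are monotone on the circle (circle homeomorphisms), so no cancellations occur. Granted that, everything else is bookkeeping.
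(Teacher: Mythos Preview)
The paper gives no proof of this lemma --- it simply declares it ``easily verified'' --- so there is nothing to compare your argument against line by line. Your part~1 argument is the standard one and is fine. Your part~2 reduction to $|g(I)|=\int_I|g'|$ is also the natural route, and your diagnosis of the gap is exactly right: for a general absolutely continuous $g$ one only has $|g(I)|\le\int_I|g'|$ (Banach's area formula), with equality precisely when $g$ is essentially injective on $I$. A simple counterexample such as $g(x)=x(1-x)$ at $x=1/2$ shows the stated identity fails without monotonicity, so the lemma as written is not quite correct in full generality.

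Your proposed fix --- invoking the monotonicity that the paper assumes for its actual application --- is the right one. The paper itself signals this: the sentence immediately following the definition of $\overline{D}$ reads ``When $g$ is absolutely continuous and monotonic (which will be our case of interest)\ldots'', and property~\ref{P1} guarantees that the inverse PM kernel, which is the only function the lemma is ever applied to, is strictly monotone. So the monotonicity hypothesis is present in spirit, just not in the lemma's formal statement. With that hypothesis added, your argument goes through cleanly: on each arc of the cyclic interval $I$ the function is monotone and absolutely continuous, so $|g(\text{arc})|=\int_{\text{arc}}|g'|$ by the fundamental theorem of calculus, and the arcs have disjoint images (again by monotonicity), so the lengths add.
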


Now, further define 
\begin{align*}
  \overline{D}[g(x)] \dfn \sup_{\lambda\in(0,1)}D_\lambda [g(x)] .
\end{align*}
When $g$ is absolutely continuous and monotonic (which will be our case of interest), then $\overline{D}[g(x)]$ is the maximal stretching of any symmetric interval (modulo $1$) around $x$ by $g$. The following lemma is a consequence of the Hardy-Littlewood maximal inequality \cite{rudin1987real}, and states that $\overline{D}[g(x)]$ is unlikely to be too large, provided that $g$ is well behaved. The proof is relegated to the appendix. 
\begin{lemma}\label{lem:hardy_littlewood}
Let $g:[0,1]\mapsto\mathbb{R}$ be absolutely continuous on $[0,1]$, and $X\sim \textrm{Unif}([0,1])$. Then for any $a>0$,
  \begin{align*}
    \Pr\left(\overline{D}[g(X)] > a\right) \leq 9a^{-1}\Expt\left|g'(X)\right| .
  \end{align*}
\end{lemma}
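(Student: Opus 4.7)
The plan is to reduce $\overline{D}[g(X)]$ pointwise to a Hardy--Littlewood maximal function of $|g'|$ on $\RealF$, and then invoke the classical weak $(1,1)$ inequality. The constant $9$ will arise naturally as $3\cdot 3$: one factor of $3$ is the standard weak-type constant for the centered maximal function in one dimension, while the other comes from the periodic extension needed to absorb the modular structure of $D_\lambda$.

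First, absolute continuity lets one bound $|g(I)| \le \int_I|g'(t)|\,dt$ for any interval $I\subseteq[0,1]$, since $g(I)$ is itself an interval by continuity and its length is at most the total variation of $g$ over $I$. Writing $\tilde h$ for the $1$-periodic extension of $|g'|$ from $[0,1]$ to $\RealF$, and splitting into (at most) two subintervals when $[x-\tfrac{\lambda}{2},x+\tfrac{\lambda}{2}]$ wraps around the endpoints of $[0,1]$, this gives
\[
D_\lambda[g(x)] \;\le\; \frac{1}{\lambda}\int_{x-\lambda/2}^{\,x+\lambda/2}\tilde h(t)\,dt
\]
for every $x\in[0,1]$ and $\lambda\in(0,1)$.

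Next, for such $x$ and $\lambda$ the ball $[x-\tfrac{\lambda}{2},x+\tfrac{\lambda}{2}]$ lies inside $[-1,2]$, so setting $h_*\dfn \tilde h\cdot\ind_{[-1,2]}$ the right-hand side above is at most $M h_*(x)$, where $M$ denotes the centered Hardy--Littlewood maximal operator on $\RealF$. Taking the supremum over $\lambda$ yields $\overline D[g(x)] \le M h_*(x)$ for all $x\in[0,1]$. Applying the classical weak $(1,1)$ inequality $|\{M h_*>a\}|\le (3/a)\|h_*\|_1$ on $\RealF$, and noting that by periodicity $\|h_*\|_1 = \int_{-1}^{2}\tilde h(t)\,dt = 3\,\Expt|g'(X)|$, one obtains
\[
\Pr\bigl(\overline D[g(X)]>a\bigr) \;\le\; \bigl|\{x\in[0,1] : M h_*(x)>a\}\bigr| \;\le\; \frac{9}{a}\Expt|g'(X)|,
\]
which is the stated bound.

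There is no serious obstacle; the entire argument is essentially bookkeeping around the modular structure of $D_\lambda$. The only point requiring genuine care is the wrap-around case in the definition of the smoothed derivative, but this is precisely what the periodic extension $\tilde h$ is designed to neutralize, converting the modular smoothed derivative into an ordinary one-dimensional centered average to which Hardy--Littlewood applies off the shelf.
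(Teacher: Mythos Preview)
Your proof is correct and follows essentially the same approach as the paper: periodically extend $|g'|$, truncate to $[-1,2]$, dominate $\overline{D}[g(x)]$ pointwise by the centered Hardy--Littlewood maximal function of that truncation, and apply the weak $(1,1)$ inequality to pick up the factor $3\cdot 3=9$. The only cosmetic difference is that the paper invokes its earlier Lemma~\ref{lem:derivative} to write $D_\lambda[g(x)]$ as an average of $|g'|$ (with equality), whereas you establish the needed inequality $|g(I)|\le\int_I|g'|$ directly from absolute continuity; your version is arguably cleaner since only the inequality is needed and it sidesteps any monotonicity caveat in the equality form.
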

\begin{remark}
  Note that if $g$ is Lipschitz (which corresponds in the sequel to the case of discrete alphabet channels), then a stronger asymptotic statement trivially holds: $\Pr\left(\overline{D}[g(X)] > a\right) = 0$ for all $a$ large enough. 
\end{remark}

Let $(X,Y)\sim P_{XY}$ be jointly distributed real-valued random variables. Let $F_X$ be the c.d.f. of $X$, and $F_X^{-1}$ be its functional inverse, generally defined by 
\begin{align*}
  F_X^{-1}(v) \dfn \inf\{x:F_X(x)>v\} . 
\end{align*}
It is easy to verify (see e.g. \cite{posterior_matching_IT}) that we can always define an auxiliary r.v. $\Theta\sim\textrm{Unif}([0,1])$ such that $X=F^{-1}_X(\Theta)$. This induces a joint distribution $P_{\Theta XY}$. Let $F_{\Theta\mid Y}(\theta\mid y)$ denote the conditional c.d.f. of $\Theta$ given $Y$, also known as the \textit{PM kernel} \cite{posterior_matching_IT}. We will also be interested in the \textit{inverse PM kernel} $F^{-1}_{\Theta\mid Y}(v\mid y)$, which is the functional inverse of the PM kernel w.r.t. $\theta$ \cite{posterior_matching_IT}. 

In the remainder of the paper, we restrict our attention to the following family $\mathfrak{F}$ of all distributions $P_{XY}$ admitting the following two properties: 
\begin{enumerate}[label=(\textbf{P\arabic*})]
\item \label{P1}$F_{\Theta\mid Y}(\theta\mid y)$ (resp. $F_{\Theta\mid Y}^{-1}(v\mid y)$) is absolutely continuous and strictly monotone in $\theta\in[0,1]$ (resp. $v\in[0,1]$) for $P_Y$-a.a. $y$. 
\item \label{P2} There exists some $\delta>0$ such that 
  \begin{align*}
    \lim_{\lambda\to 0}\Expt|\log D_\lambda [F^{-1}_{\Theta\mid Y}(V\mid Y)]|^{2+\delta}  < \infty,
  \end{align*}
where  $Y\sim P_Y$ and $V\sim \textrm{Unif}([0,1])$ are independent, and the $\lambda$-smoothed derivative is taken w.r.t. $v$. 
\end{enumerate}
\begin{remark}\label{rem:family}
  The family $\mathfrak{F}$ is quite rich and includes all discrete distributions, as well as many continuous and mixed alphabet distributions. See Remark \ref{rem:family2} following Theorem \ref{thrm:main}. 
\end{remark}

The following claims are readily verified. 
\begin{lemma}\label{lem:PM_kernel}
  Suppose $P_{XY}$ satisfies property \ref{P1}. Then 
  \begin{enumerate}[label=(\roman*)]
  \item $\frac{\partial}{\partial v} F_{\Theta\mid Y}^{-1}(v\mid y) = 1/f_{\Theta\mid Y}(F_{\Theta\mid Y}^{-1}(v\mid y)\mid y)$ for $P_Y$-a.a. $y$. \label{item:PM_der}
  \item $I(X;Y) = I(\Theta;Y) < \infty$. \label{item:Ixy=Ithetay}
  \end{enumerate}
\end{lemma}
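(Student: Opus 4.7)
For part (i), the plan is straightforward: I would apply the inverse function theorem to the defining identity $F_{\Theta\mid Y}(F^{-1}_{\Theta\mid Y}(v\mid y)\mid y) = v$. By \ref{P1}, $F_{\Theta\mid Y}(\cdot\mid y)$ is absolutely continuous and strictly monotone for $P_Y$-a.a.\ $y$, so it admits an a.e.\ positive density $f_{\Theta\mid Y}(\cdot\mid y)$. Differentiating both sides in $v$ and applying the chain rule then yields the reciprocal formula at almost every $v$.

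For part (ii), I would split the argument into the equality and the finiteness. The equality $I(X;Y) = I(\Theta;Y)$ follows from a two-sided data-processing argument. Since $X = F_X^{-1}(\Theta)$ is a deterministic function of $\Theta$, the chain $Y - \Theta - X$ is Markov, giving $I(X;Y) \leq I(\Theta;Y)$. Conversely, the standard construction produces $\Theta$ from $X$ by appending independent auxiliary randomization (needed only to break ties at atoms of $F_X$), so $Y - X - \Theta$ is also Markov and $I(\Theta;Y) \leq I(X;Y)$.

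For the finiteness, I would exploit that $\Theta \sim \textrm{Unif}([0,1])$, which together with \ref{P1} yields $h(\Theta) = 0$ and
\begin{align*}
I(\Theta;Y) \;=\; -h(\Theta\mid Y) \;=\; \Expt\bigl[\log f_{\Theta\mid Y}(\Theta\mid Y)\bigr].
\end{align*}
I would then change variables via $V \dfn F_{\Theta\mid Y}(\Theta\mid Y)$, which by the probability integral transform (valid because of the continuity and strict monotonicity in \ref{P1}) is $\textrm{Unif}([0,1])$ conditional on $Y$ and thus independent of $Y$. Combining this with part (i) gives
\begin{align*}
I(\Theta;Y) \;=\; -\Expt\bigl[\log \tfrac{\partial}{\partial v} F^{-1}_{\Theta\mid Y}(V\mid Y)\bigr] .
\end{align*}
By Lemma \ref{lem:derivative}, the integrand equals $-\log D[F^{-1}_{\Theta\mid Y}(V\mid Y)]$ wherever $F^{-1}_{\Theta\mid Y}(\cdot\mid Y)$ is differentiable, which by absolute continuity holds on a set of full measure. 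Since $D = \limsup_{\lambda\to 0} D_\lambda$, Fatou's lemma combined with \ref{P2} yields
\begin{align*}
\Expt\bigl|\log \tfrac{\partial}{\partial v} F^{-1}_{\Theta\mid Y}(V\mid Y)\bigr|^{2+\delta} \;\leq\; \liminf_{\lambda\to 0}\Expt\bigl|\log D_\lambda [F^{-1}_{\Theta\mid Y}(V\mid Y)]\bigr|^{2+\delta} \;<\; \infty,
\end{align*}
so in particular the first absolute moment is finite and $|I(\Theta;Y)| < \infty$.

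The main technical care will lie in the finiteness step: justifying the change of variable from $\Theta$ to $V$ (conditional uniformity and resulting independence from $Y$), and checking that $D_\lambda \to \tfrac{\partial}{\partial v} F^{-1}_{\Theta\mid Y}$ on a set of full $P_{V,Y}$-measure so that Fatou's lemma can be legitimately invoked on $|\log D_\lambda|^{2+\delta}$. The atom bookkeeping in the Markov chain $Y - X - \Theta$ is routine but worth stating explicitly to cover discrete and mixed alphabet cases uniformly.
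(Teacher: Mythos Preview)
The paper does not actually prove this lemma; it is introduced with ``The following claims are readily verified'' and no argument is given. Your proposal is a correct and appropriately detailed fill-in of what the authors left implicit.

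One point worth flagging: your finiteness argument for part~(ii) invokes property~\ref{P2}, whereas the lemma as stated assumes only~\ref{P1}. This is harmless in context, since the paper works throughout inside the family~$\mathfrak{F}$ where both~\ref{P1} and~\ref{P2} are standing assumptions; but strictly speaking~\ref{P1} alone does not seem to force $I(\Theta;Y)<\infty$, as one can construct conditional densities $f_{\Theta\mid Y}(\cdot\mid y)$ that are strictly positive and absolutely continuous yet peaked enough that $-h(\Theta\mid Y)=+\infty$. So your reliance on~\ref{P2} is not a weakness of your argument but rather an indication that the lemma's hypothesis is slightly understated. Your Fatou step is sound: by Lemma~\ref{lem:derivative} and the Lebesgue differentiation theorem, $D_\lambda[F^{-1}_{\Theta\mid Y}(v\mid y)]$ converges (not merely as a $\limsup$) to $\bigl|\tfrac{\partial}{\partial v}F^{-1}_{\Theta\mid Y}(v\mid y)\bigr|$ for almost every $v$, so the pointwise a.e.\ convergence of $|\log D_\lambda|^{2+\delta}$ needed for Fatou is available.
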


Finally, we say that a r.v. $X$ is \textit{stochastically smaller} than another r.v. $Y$, if $\Pr(Y\leq a) \leq \Pr(X\leq a)$ for any $a$. More generally, we say that $X$ is stochastically smaller than $Y$ given some event $A$, if $\Pr(Y\leq a \mid  A) \leq \Pr(X\leq a)$ for any $a$.

\subsection{Setup} 
We are concerned with the following feedback communication setup. A transmitter is in possession of a \textit{message point} $\Theta_0\sim\textrm{Unif}([0,1])$, its binary expansion representing an infinite i.i.d. uniform bit sequence to be reliably communicated to a receiver over a memoryless channel $P_{Y|X}$. The input and output of the channel at time $n$ are denoted $X_n$ and $Y_n$ respectively. We assume there is a noiseless instantaneous feedback link from the receiver back to the transmitter, so that at time $n$ the transmitter is in possession of $Y^{n-1}$. The memoryless channel model means that $Y_n$ is independent of $(X^{n-1},Y^{n-1},\Theta_0)$ given $X_n$, and that $Y_n\mid X_n=x_n \sim P_{Y\mid X}(\cdot\mid x_n)$. Furthermore, we assume the transmitter and the receiver share some common randomness; specifically, we assume they can jointly draw an i.i.d. sequence $\{V_n\sim\textrm{Unif}\left(\left[0,1\right]\right)\}_{n=1}^\infty$, where $V_n$ is statistically independent of $(\Theta_0,X^n,Y^n,V^{n-1})$. 

A (sequential, horizon-free) \textit{transmission scheme} is an infinite sequence of mappings that determine the next channel input $X_{n+1}$ as a function of $(\Theta_0,Y^n,V^n)$. A \textit{decoding rule} is a corresponding sequence of functions that map  $(Y^n,V^n)$ to an interval (modulo $1$) $J_n$, in which the receiver believes the message point lies. The \textit{error probability} attained by a scheme and a decoding rule at time $n$ is $p_e=\Pr(\Theta_0\not\in J_n)$, and the associated \textit{instantaneous rate} is $R_n = -\frac{1}{n}\log |J_n|$. The relation to decoding actual bits is simple: Identifying the said interval of size $2^{-nR_n}$ essentially guarantees that the $nR_n$ most significant bits of $\Theta_0$ can be decoded with error probability $p_e$, up to technical edge issues that can be easily resolved (see \cite{posterior_matching_IT}). A transmission scheme is said to attain a rate $R$, if for any target error probability $p_e>0$ there is a suitable decoding rule such that $\Pr(R_n \geq  R)\to 1$ as $n\to\infty$. In the following two subsections we describe a simple and optimal construction of a transmission scheme and decoding rule, namely the randomized PM scheme with  RIFS decoding.  

\subsection{Randomized Posterior Matching}

Let $P_{Y\mid X}$ be a memoryless channel law, and set some input distribution $P_X$ (say, capacity achieving under some input constraint). Consider the following recursively defined transmission scheme: 
\begin{align}\label{eq:rpm}
  &\Theta_1 = \Theta_0 \nonumber\\
  &X_n = F_X^{-1}(\Theta_n) \nonumber \\ 
  &\Theta_{n+1} = \left(F_{\Theta\mid Y}\left(\Theta_n \mid Y_n\right)+ V_n\right) \mod1
\end{align}
The scheme in~\eqref{eq:rpm} will be referred to as the \textit{randomized PM scheme}. Note that for $V_n=0$ this coincides with the classical PM scheme \cite{posterior_matching_IT}. The randomization idea is key to our simplified analysis, and is due to Li and El Gamal \cite{LiElGamal:2015} who analyzed a non-sequential fixed-rate fixed-block-length version of this scheme in a DMC setting.  

We recall a few known properties of PM that are also inherited by its randomized sibling, with minor modifications accounting for common randomness. The proofs follow easily from the associated claims in \cite{posterior_matching_IT}, e.g. by thinking of $(Y_n,V_n)$ as the channel output, and are omitted. 

\begin{lemma}\label{lem:PM_props}
  The randomized PM scheme satisfied the following:
  \begin{enumerate}[label=(\roman*)]
  \item $\Theta_n\sim \textrm{Unif}\left([0,1]\right)$, $X_n\sim P_X$, and $Y_n\sim P_Y$.
  \item $\Theta_n$ (and hence $X_n$) is statistically independent of $(Y^{n-1},V^{n-1})$.
  \item $\{Y_n\}_{n=1}^\infty$ and $\{V_n\}_{n=1}^\infty$ are mutually independent i.i.d. sequences. \label{item:YViid}
  \item $I(\Theta_0; Y_n\mid Y^{n-1},V^n) = I(X;Y)$.
  \item $I(\Theta_0; Y^n \mid  V^n) = nI(X;Y)$.
  \end{enumerate}
\end{lemma}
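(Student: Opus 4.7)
The plan is to prove properties (i) and (ii) jointly by induction on $n$, and then read off (iii)--(v) as easy consequences. The base case $n=1$ is immediate: $\Theta_1 = \Theta_0 \sim \textrm{Unif}([0,1])$ by construction and the independence assertion in (ii) is vacuous, so $X_1 \sim P_X$ and $Y_1 \sim P_Y$ follow from $X_1 = F_X^{-1}(\Theta_1)$ and the channel law. For the inductive step, I assume $\Theta_n \sim \textrm{Unif}([0,1])$ and that $\Theta_n$ is independent of $(Y^{n-1}, V^{n-1})$. The key observation is that conditioning on $(Y^{n-1}, V^{n-1})$ leaves the joint $(\Theta_n, X_n, Y_n)$ distributed as the unconditional $P_{\Theta X Y}$, so a probability integral transform argument---valid because property \ref{P1} guarantees that $F_{\Theta\mid Y}(\cdot\mid y)$ is continuous and strictly increasing---shows that $F_{\Theta\mid Y}(\Theta_n\mid Y_n)$ is uniform on $[0,1]$ conditional on $(Y^n, V^{n-1})$. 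Since $V_n$ is independent of the entire past by the setup, adding it modulo $1$ preserves uniformity and, crucially, yields independence of $(Y^n, V^n)$, which closes the induction.

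Properties (iii) and (iv) then follow directly. For (iii), (ii) gives that $X_k$ is independent of $(Y^{k-1}, V^{k-1})$, so $Y_k \sim P_Y$ independently of $Y^{k-1}$; combined with the standing independence of the $V_k$'s from the rest of the process, this furnishes two mutually independent i.i.d. sequences. For (iv), the recursion together with $X_n = F_X^{-1}(\Theta_n)$ makes $X_n$ a deterministic function of $(\Theta_0, Y^{n-1}, V^{n-1})$, so by the memorylessness of the channel
\begin{equation*}
H(Y_n \mid \Theta_0, Y^{n-1}, V^n) = H(Y_n \mid X_n) = H(Y\mid X),
\end{equation*}
while (iii) yields $H(Y_n \mid Y^{n-1}, V^n) = H(Y)$; subtracting gives $I(X;Y)$.

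For (v), I would apply the chain rule
\begin{equation*}
I(\Theta_0; Y^n \mid V^n) = \sum_{k=1}^n I(\Theta_0; Y_k \mid Y^{k-1}, V^n),
\end{equation*}
and note that $(V_{k+1},\ldots,V_n)$ is independent of $(\Theta_0, Y^k, V^k)$, so the future $V$'s can be dropped from the conditioning without changing the mutual information. Each term therefore reduces to $I(\Theta_0; Y_k \mid Y^{k-1}, V^k) = I(X;Y)$ by (iv), and the sum equals $nI(X;Y)$. I foresee no conceptual obstacle---the randomization by $V_n$ is exactly what keeps the PIT argument clean at every step---and the only delicate point is verifying that the probability integral transform truly produces a uniform random variable, which is precisely what property \ref{P1} buys us.
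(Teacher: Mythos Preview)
Your proof is correct and, in fact, supplies what the paper deliberately omits: the paper does not prove this lemma at all, stating only that ``the proofs follow easily from the associated claims in \cite{posterior_matching_IT}, e.g.\ by thinking of $(Y_n,V_n)$ as the channel output, and are omitted.'' Your induction on $n$ using the probability integral transform (licensed by property~\ref{P1}) is exactly the right mechanism, and your derivations of (iii)--(v) from (i)--(ii) are sound.

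One small technical remark: in (iv) you pass through entropies, writing $I(\Theta_0;Y_n\mid Y^{n-1},V^n)=H(Y_n\mid Y^{n-1},V^n)-H(Y_n\mid\Theta_0,Y^{n-1},V^n)$. This is unproblematic for DMCs, but since the paper allows continuous and mixed alphabets, it is cleaner to argue directly with mutual information: $Y_n$ is independent of $(Y^{n-1},V^n)$ by (iii), so the conditional mutual information equals $I(\Theta_0,Y^{n-1},V^n;Y_n)$; then the Markov chain $(\Theta_0,Y^{n-1},V^n)\to X_n\to Y_n$ together with $X_n$ being a function of $(\Theta_0,Y^{n-1},V^{n-1})$ gives equality with $I(X_n;Y_n)=I(X;Y)$. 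This avoids any question of whether the relevant (differential) entropies are finite. Also, your closing comment that ``the randomization by $V_n$ is exactly what keeps the PIT argument clean'' slightly overstates its role here: the PIT step already yields $F_{\Theta\mid Y}(\Theta_n\mid Y_n)$ uniform and independent of $(Y^n,V^{n-1})$ without the shift; the shift is needed elsewhere in the paper (to eliminate fixed points in the RIFS analysis), not for this lemma.
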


\subsection{Reversed Iterated Function System (RIFS) Decoding}
In this subsection we describe a decoding rule for the randomized PM, that maps $Y^n$ into an interval that is guaranteed to contain the message point $\Theta_0$ up to a prescribed error probability (see \cite{posterior_matching_IT} for more details). Let $F^{-1}_{\Theta\mid Y}(v\mid y)$ be the inverse PM kernel, i.e., 
\begin{align*}
  F_{\Theta\mid Y}^{-1}(v\mid y) \dfn \inf\{\theta:F_{\Theta\mid Y}(\theta\mid y)>v\}.
\end{align*}

Set some target error probability $\pe>0$, and let $J_0\subset(0,1)$ be an interval of size $|J_0| = 1-\pe$. The RIFS decoder outputs the interval $J_n$ defined recursively by 
\begin{align}\label{eq:RIFS_Jn}
  J_{k+1} &= F^{-1}_{\Theta\mid Y}(\left(J_k - V_{n-k}\right) \mod1 \mid Y_{n-k})
\end{align}
for $k=0,\ldots,n-1$. Recall that we effectively identify $[0,1)$ with the circle $\mathbb{R}\slash \mathbb{Z}$, hence we allow wrap-around intervals, i.e., the interval $(a,b)$ for $a>b$ is the union $(a,1)\cup[0,b)$. 
\begin{lemma}[\cite{posterior_matching_IT}]\label{lem:RIFS}
  The probability of error incurred by the above RIFS decoder is $\Pr(\Theta_0\not\in J_n) = p_e$. 
\end{lemma}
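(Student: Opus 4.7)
The idea is to recognize the RIFS recursion as inverting, one step at a time, the randomized PM dynamics, and then to use the fact that $\Theta_{n+1}$ is uniform on $[0,1]$. For each $m$, let $\varphi_m:[0,1)\to[0,1)$ be the randomized PM forward map at time $m$, namely
\begin{align*}
\varphi_m(\theta) \dfn \left(F_{\Theta\mid Y}(\theta\mid Y_m) + V_m\right)\mod 1,
\end{align*}
so that, by \eqref{eq:rpm}, $\Theta_{m+1} = \varphi_m(\Theta_m)$. By property \ref{P1}, the PM kernel $F_{\Theta\mid Y}(\cdot\mid y)$ is a strictly monotone absolutely continuous bijection of $[0,1]$ for $P_Y$-a.a. $y$; translating by $V_m$ on the circle preserves this, so $\varphi_m$ is a bijection of the circle $\mathbb{R}/\mathbb{Z}$ with well-defined inverse
\begin{align*}
\varphi_m^{-1}(v) = F_{\Theta\mid Y}^{-1}\!\left((v - V_m)\mod 1 \,\big|\, Y_m\right),
\end{align*}
acting element-wise on subsets.

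Next, I would unroll the RIFS recursion \eqref{eq:RIFS_Jn}. By induction on $k$, the identity $J_{k+1} = \varphi_{n-k}^{-1}(J_k)$ gives
\begin{align*}
J_n = \left(\varphi_1^{-1}\circ \varphi_2^{-1}\circ\cdots\circ \varphi_n^{-1}\right)(J_0) = \left(\varphi_n \circ \varphi_{n-1}\circ\cdots\circ \varphi_1\right)^{-1}(J_0).
\end{align*}
Because each $\varphi_m$ is a bijection, membership is preserved under inversion: for any point $\theta\in[0,1)$,
\begin{align*}
\theta \in J_n \quad\Longleftrightarrow\quad \left(\varphi_n \circ \cdots \circ \varphi_1\right)(\theta) \in J_0.
\end{align*}

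Applying this to $\theta = \Theta_0 = \Theta_1$ and using $\Theta_{n+1} = \varphi_n\circ\cdots\circ\varphi_1(\Theta_1)$, I get
\begin{align*}
\Pr(\Theta_0 \in J_n) = \Pr(\Theta_{n+1}\in J_0).
\end{align*}
Since $J_0$ is a deterministic interval of Lebesgue measure $1-\pe$, and Lemma \ref{lem:PM_props}(i) tells us that $\Theta_{n+1}\sim\textrm{Unif}([0,1])$, the right-hand side equals $|J_0| = 1-\pe$, which yields $\Pr(\Theta_0\notin J_n) = \pe$ as claimed.

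There is no real obstacle here; the only things to check carefully are the bijectivity of each $\varphi_m$ on the circle (which is exactly what property \ref{P1} and the cyclic definition of the random shift guarantee) and the bookkeeping of indices in the composition. Both are clean once the RIFS is viewed as a step-by-step inversion of the forward PM dynamics.
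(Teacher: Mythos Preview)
Your proof is correct and takes essentially the same approach as the paper's: both recognize the RIFS recursion as inverting the randomized PM forward dynamics one step at a time via property~\ref{P1}, reduce $\{\Theta_0\in J_n\}$ to $\{\Theta_{n+1}\in J_0\}$ (the paper writes $\Theta_n$, an inconsequential indexing slip), and finish using the uniformity of $\Theta_{n+1}$ from Lemma~\ref{lem:PM_props}. Your packaging via the explicit composition $J_n=(\varphi_n\circ\cdots\circ\varphi_1)^{-1}(J_0)$ is slightly cleaner than the paper's step-by-step conditioning, but the mathematical content is identical.
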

\begin{proof}
\begin{align}
  \Pr(\Theta_0\in J_n) &= \Pr(\Theta_1\in J_n) \nonumber\\
  &= \Expt\Pr(\Theta_1\in J_n\mid Y_1,V_1) \nonumber\\
  &= \Expt\Pr(\Theta_2\in J_{n-1}\mid Y_1,V_1) \label{eq:RIFS_inv}\\
  &= \Pr(\Theta_2\in J_{n-1}) \label{eq:RIFS_ind}\\ 
  &= \cdots \label{eq:RIFS_iterate}\\
  &= \Pr(\Theta_n\in J_0) \nonumber\\ 
  &= 1-\pe .\label{eq:RIFS_bydef}
\end{align}
\eqref{eq:RIFS_inv} follows since~\eqref{eq:RIFS_Jn} is invertible given $Y_{n-k},V_{n-k}$, by virtue of property~\ref{P1}. ~\eqref{eq:RIFS_ind} follows since by Lemma~\ref{lem:PM_props} $\Theta_{k+1}$ is independent of $(Y_k,V_k)$. In~\eqref{eq:RIFS_iterate} we iterate the same arguments, and~\eqref{eq:RIFS_bydef} holds by definition. 
\end{proof}

Define the sequence of \textit{contraction terms}: 
\begin{align*}
  L_k \dfn \log\left(\frac{|J_{k-1}|}{|J_k|}\right) ,
\end{align*}
and set $L_0\dfn -\log(1-p_e)$. Define further 
\begin{align*}
  R_n \dfn \frac{1}{n}\sum_{k=0}^nL_k . 
\end{align*}

From the discussion above it is clear that the RIFS decoder outputs an interval of (random) size $2^{-nR_n}$ in which $\Theta_0$ is guaranteed to lie with probability $1-p_e$. Therefore, $R_n$ is the (random) instantaneous rate of randomized PM under RIFS decoding with error probability $p_e$. In what follows, we will be interested in guarantees on $R_n$. As we shall see, in many cases $R_n$ becomes arbitrarily close (for any target $p_e$) to the optimal value $I(X;Y)$ with high probability as $n$ grows large. Thus, randomized PM can achieve any rate up to channel capacity. 

\section{Main Result}

We state our main result, showing that under very mild regularity conditions the randomized PM scheme with RIFS decoding achieves any rate below the mutual information. 

\begin{theorem}\label{thrm:main}
Let $(X,Y)\sim P_{XY}\in\mathfrak{F}$ and assume that $0<I(X;Y)<\infty$. Then for any target error probability $\pe$ and any $\eps>0$, the decoding rate achieved by the associated randomized PM scheme with RIFS decoding satisfies 
\begin{align*}
\lim_{n\to\infty}\Pr(R_n > I(X;Y)-\eps)  = 1
\end{align*}
\end{theorem}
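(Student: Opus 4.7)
The plan is to analyze the random walk $(-\log\lambda_k)$ with $\lambda_k := |J_k|$, leveraging the independence created by the random shifts. Let $\mathcal{G}_k := \sigma(Y_{n-k+1}, V_{n-k+1}, \ldots, Y_n, V_n)$, and let $c_k$ denote the center of $J_k$. By~\eqref{eq:RIFS_Jn}, $J_{k+1}$ is the image under $F^{-1}_{\Theta\mid Y}(\cdot\mid Y_{n-k})$ of the length-$\lambda_k$ cyclic interval centered at $W_k := (c_k - V_{n-k}) \mod1$. Since Lemma~\ref{lem:PM_props}(iii) makes $V_{n-k}$ uniform and independent of $(Y_{n-k}, \mathcal{G}_k)$, the point $W_k$ is uniform on $[0,1]$ and independent of $Y_{n-k}$ given $\mathcal{G}_k$. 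Therefore $L_{k+1} = -\log D_{\lambda_k}[F^{-1}_{\Theta\mid Y}(v\mid Y_{n-k})]\big|_{v=W_k}$, and
\[
\Expt[L_{k+1}\mid\mathcal{G}_k] \;=\; G(\lambda_k), \quad G(\lambda) \;:=\; \Expt\bigl[-\log D_\lambda[F^{-1}_{\Theta\mid Y}(V\mid Y)]\bigr],
\]
with $(V,Y)\sim\textrm{Unif}[0,1]\otimes P_Y$ independent. Applying Jensen's inequality in both directions (via Lemma~\ref{lem:derivative}, and using $\int_0^1 |F^{-1}_{\Theta\mid Y}{}'(v\mid y)|\,dv \le 1$ together with Lemma~\ref{lem:PM_kernel}) gives $0 \le G(\lambda) \le I(X;Y)$. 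Moreover, Lebesgue differentiation and the uniform integrability from property~\ref{P2} yield $G(\lambda) \to I(X;Y)$ as $\lambda \to 0$ and make $G$ continuous on $(0,1]$; strict Jensen forces $G(\lambda) > 0$ for all $\lambda \in (0, 1]$ whenever $I(X;Y) > 0$ (equality would collapse $\Theta\mid Y$ to uniform).

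Next, I decompose
\[
-\log\lambda_n \;=\; -\log\lambda_0 + \sum_{k=0}^{n-1} G(\lambda_k) + M_n, \qquad M_n := \sum_{k=0}^{n-1}\bigl(L_{k+1} - G(\lambda_k)\bigr),
\]
where $M_n$ is a martingale with respect to $(\mathcal{G}_k)$ whose conditional $(2+\delta)$-moments are uniformly bounded by property~\ref{P2}. A standard $L^p$ strong law for martingale differences (e.g.\ Chow's theorem) then gives $M_n/n \to 0$ almost surely, so $R_n = -\tfrac{1}{n}\log\lambda_n = \tfrac{1}{n}\sum_{k=0}^{n-1} G(\lambda_k) + o(1)$ a.s. The theorem therefore reduces to showing $\tfrac{1}{n}\sum_{k=0}^{n-1} G(\lambda_k) \to I(X;Y)$ in probability.

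The main obstacle is driving $\lambda_k \to 0$. Because $G \ge 0$, $(\log\lambda_k)$ is a supermartingale bounded above by $0$, so Lemma~\ref{lem:martingale_conv} (applied, after sign change and truncation, to $-\log\lambda_k$) gives $\log\lambda_k \to L_\infty$ a.s.\ with $L_\infty \in [-\infty, 0]$. I would argue $L_\infty = -\infty$ a.s.\ by contradiction: on the event $\{L_\infty > -\infty\}$, $\lambda_k \to \lambda_\infty > 0$ a.s., so continuity and strict positivity of $G$ give $G(\lambda_k) \to G(\lambda_\infty) > 0$, whence $\sum_k G(\lambda_k)$ diverges linearly; combined with $M_n/n \to 0$, the decomposition above then forces $-\log\lambda_n \to \infty$, contradicting $\lambda_n \to \lambda_\infty > 0$. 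Hence $\lambda_k \to 0$ a.s., and bounded convergence (using $0 \le G \le I(X;Y)$) yields $\tfrac{1}{n}\sum G(\lambda_k) \to I(X;Y)$ a.s., so in fact $R_n \to I(X;Y)$ a.s., stronger than the claimed convergence in probability. The most delicate technical ingredients are the continuity and strict positivity of $G$ on $(0,1]$ --- for which I would combine strict Jensen, Lemma~\ref{lem:hardy_littlewood} (to control negative excursions of $L_{k+1}$), and the moment bound~\ref{P2} --- together with verifying Chow's moment hypotheses for $M_n$.
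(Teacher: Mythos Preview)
Your Doob-decomposition approach is appealing and, if it went through, would actually yield the stronger conclusion $R_n\to I(X;Y)$ a.s. However, there are two genuine gaps, both rooted in the same difficulty the paper works hard to overcome: controlling the process when $\lambda_k$ is \emph{not} small.

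\textbf{The main gap: convergence of $\log\lambda_k$.} You invoke Lemma~\ref{lem:martingale_conv} to assert that the supermartingale $\log\lambda_k$ converges a.s.\ to some $L_\infty\in[-\infty,0]$. But that lemma requires $\sup_k\Expt|\log\lambda_k|<\infty$, and here $\Expt[-\log\lambda_k]=-\log\lambda_0+\sum_{j<k}\Expt G(\lambda_j)$ grows without bound (indeed, you \emph{want} it to grow like $kI(X;Y)$). No ``sign change and truncation'' rescues this: truncating a submartingale from above destroys the submartingale property, and stopping it upon first exceeding a level allows arbitrary overshoot. Your contradiction argument therefore lacks its premise; without convergence you cannot conclude $G(\lambda_k)\to G(\lambda_\infty)>0$. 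Moreover, even granting convergence, the case $\lambda_\infty=1$ is not excluded and yields $G(\lambda_\infty)=0$ (note $G(1)=0$, contrary to your claim that $G>0$ on $(0,1]$), so the contradiction would fail there. The paper handles exactly this obstacle: Lemma~\ref{lem:limsup} rules out the process getting stuck in any bounded range via a partitioning argument, and Lemma~\ref{lem:SDiverges} then applies the martingale convergence theorem not to $S_n$ itself but to the bounded transform $T_n=1-1/(1+S_n)$, after verifying (using the $(2+\delta)$-moment bound) that $T_n$ is eventually a submartingale.

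\textbf{A secondary gap: uniformity in Chow's hypothesis.} Property~\ref{P2} bounds $\Expt|L^{(\lambda)}|^{2+\delta}$ only as $\lambda\to 0$, not uniformly over $\lambda\in(0,1)$. Hence the conditional moments of your martingale differences $L_{k+1}-G(\lambda_k)$ are controlled only when $\lambda_k$ is already small, which you have not yet established; the argument becomes circular. The paper sidesteps this via Lemma~\ref{lem:hardy_littlewood}, which gives a uniform exponential bound on the \emph{negative} tail of $L^{(\lambda)}$ for all $\lambda$, and then uses the coupling with $(U,W)$ rather than a martingale SLLN.
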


\begin{remark}\label{rem:family2}
  The conditions in the theorem are very general, and specifically hold in the following cases:
  \begin{itemize}
  \item For any discrete memoryless channel with any input distribution such that $I(X;Y)>0$. In this case \cite{posterior_matching_IT} the PM kernel is a quasi-linear function in $\theta$ for any fixed $y$, with slopes corresponding to the conditional distributions of $x$ given $y$. 
  \item When the conditional p.d.f. $f_{X|Y}(x|y)$ exists, is bounded, and has bounded support, for any $y$. 
  \item For any additive noise channel $Y=X+Z$ where $Z$ is independent of $X$, both $Z$ and $Y$ have bounded p.d.fs, and either:
    \begin{itemize}
    \item $f_Z(z), f_Y(y)$ have bounded supports; or, 
    \item $f_Z(z) \geq 2^{-O(|z|^{k_1})}, f_Y(y) \geq 2^{-O(|y|^{k_2})}$ and $\Expt |Z|^{3k_1}, \Expt |Y|^{3k_2} < \infty$ for some $k_1,k_2>0$. This includes in particular the additive Gaussian channel with a Gaussian input, where the scheme essentially reduces to the well known Schalwijk-Kailath Scheme \cite{Schalkwijk-Kailath,Schalkwijk2}. Note that this subfamily also includes mixed alphabet channels, e.g. binary input and additive Gaussian noise, etc. 
  \end{itemize}
\end{itemize}
\end{remark}

\begin{remark}
  The original PM optimality result (no randomization) requires the posterior matching kernel to be free of any fixed points \cite{posterior_matching_IT}. It was further shown in \cite{posterior_matching_allerton} that the existence of such fixed points is possible, and that in such a case no positive rate can be attained, unless a suitable input transformation is applied. We note that the randomized PM does not suffer from this issue; the fixed point problem is ``washed away'' by the random shifting operation. 
\end{remark}

\section{Proof of  Main Result}
\subsection{Proof Sketch}
Before we proceed to formally prove Theorem \ref{thrm:main}, we give a heuristic argument that captures the essence of the proof. Let $S_n\dfn nR_n  = \sum_{k=0}^nL_k$ be the sum of contraction terms at time $n$. First, note that if we fix the horizon $n$, the process $\{S_k\}_{k=1}^n$ is a Markov chain in the time index $k$. Alas, the stochastic process $S_n$ is \textit{not} a Markov chain in the horizon parameter $n$, since the RIFS process evolves backward in time (see \cite{posterior_matching_IT} for more details). However, since we are only interested in the asymptotic (marginal) behavior of $S_n$ as the horizon $n$ grows unbounded, then instead of fixing the horizon $n$ and analyzing the process $S_k$, we can assume the horizon is infinite and think of $S_n$ as a Markov chain for any $n\in\mathbb{N}$ (with some abuse of notations, where we replaced $S_k$ with $S_n$). The associated processes $L_n$ and $J_n$ will be indexed by $n$ as well. In other words, we are effectively thinking of the decoding process going forward in time, instead of backward. 

How does the process $S_n$ evolve? At time $n$, imagine we are in possession of some random interval $J_n$ of size $|J_n| = 2^{-S_n}$, corresponding to the interval the RIFS holds after $n$ backward iterations. The position of $J_n$ is uniformly distributed over the unit interval modulo $1$, due to the random shift operation. We independently draw a r.v. $Y_n\sim P_Y$ (recalling that the output sequence is i.i.d), and apply the inverse PM kernel to obtain the next interval $J_{n+1} = F^{-1}_{\Theta\mid Y}(J_n \mid Y_n)$, which is then randomly shifted modulo $1$. This procedure yields the update 
\begin{align*}
  S_{n+1} = S_n + L_n, \quad \textrm{where} \quad  L_n = \log\left(\frac{|J_n|}{|J_{n+1}|}\right).
\end{align*}
The process $S_n$ is thus a Markovian random walk on $\RealF^+$, starting from $S_0 = -\log(1-p_e)$, with the contraction terms $L_n$ as its increments.

Now, assume that $S_n$ is already very large, i.e. that the associated interval size $|J_n|$ is very small. What is the increment $L_n$ in this case? Clearly, $J_n$ will shrink (or stretch) by a (random) factor that is roughly the derivative of $F^{-1}_{\Theta\mid Y}(v\mid y)$ w.r.t. $v$, evaluated for $y=Y_n$ and at $v$ that is (say) the random midpoint of $J_n$, which is $\sim\textrm{Unif}\left([0,1]\right)$ and independent of $Y_n$. By Lemma \ref{lem:PM_kernel} claim \ref{item:PM_der}, this derivative is equal to $1/f_{\Theta\mid Y}(F^{-1}_{\Theta\mid Y}(v\mid y))$. The contraction term is hence roughly $\log f_{\Theta\mid Y}(F^{-1}_{\Theta\mid Y}(V_n\mid Y_n)\mid Y_n)$. Defining $\tilde{\Theta} = F^{-1}_{\Theta\mid Y}(V_n\mid Y_n)$, it is readily verified that $(\tilde{\Theta},Y_n)\sim P_{\Theta Y}$ as induced by $P_{XY}$ and $X=F_X^{-1}(\Theta)$ (see Lemma \ref{lem:Llim0}). Thus, we conclude that when $S_n$ is large, the contraction term $L_n$ has distribution close to that of the r.v. $\log f_{\Theta\mid Y}(\Theta\mid Y)$, and hence $\Expt L_n \approx I(\Theta;Y) = I(X;Y)$. Thus, as long as $S_n$ does not become too small, it grows like the sum of roughly i.i.d. random variables with expectation $I(X;Y)$, which is why we expect $S_n$ to be close to $n I(X;Y)$. 

Of course, the devil is in the details. The main technical challenge is to bound the behavior of the chain for small $S_n$, in which case the contraction terms behave quite differently; in contrast to the case of a large $S_n$ where the distribution of the contraction terms is essentially independent of the actual value of $S_n$, here this distribution strongly depends on the exact position of the random walk. More specifically, instead of being the logarithm of the derivative of the inverse PM kernel, the contraction terms in the ``small'' regime correspond to the logarithm of the $\lambda$-smoothed derivative of the inverse PM kernel, with a smoothing factor of $\lambda=2^{-S_n}$. In the next subsection, we deal with these difficulties: First, we show that $S_n$ spends overall little time in the ``small'' regime (note that it can go back and forth between ``large'' and ``small''). Then, we couple the process $S_n$ with a simpler process $S_n'$ that has only two modes of i.i.d. behavior, corresponding to whether $S_n$ is ``small'' or ``large''. We show that the contribution of the ``small'' mode of $S_n'$ is negligible, and that consequently $S_n'$ is close to $nI(X;Y)$ with high probability. The proof is then completed by observing that $S_n'$  is stochastically smaller than $S_n$.

\subsection{Detailed Proof}
In this subsection we prove Theorem \ref{thrm:main}. We use the definition of $S_n$ as a Markovian random walk on $\RealF^+$, with the time arrow going forward instead of backward, as described in the previous subsection. Define the random variable 
\begin{align}\label{eq:Llambda}
L^{(\lambda)}\dfn -\log D_\lambda [F^{-1}_{\Theta\mid Y}(V\mid Y)] ,
\end{align}
where  $Y\sim P_Y$ and $V\sim \textrm{Unif}([0,1])$ are independent. Clearly, the distribution of $L^{(\lambda)}$ is the same as the distribution of the contraction factor $L_n$ given that $S_{n-1} = -\log\lambda$. 

We begin by proving two lemmas characterizing the behavior of $L^{(\lambda)}$. 
\begin{lemma}\label{lem:Llim0}
Let $\wt{\Theta} \dfn  F_{\Theta\mid Y}^{-1}(V\mid Y)$. Then $(\wt{\Theta},Y)\sim P_{\Theta Y}$ and 
\begin{align*}
  \lim_{\lambda\to 0}L^{(\lambda)} = \log \frac{f_{\Theta\mid Y}(\wt{\Theta}\mid Y)}{f_\Theta(\wt{\Theta})} \quad a.s. 
\end{align*}
\end{lemma}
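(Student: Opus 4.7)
The lemma comprises two essentially independent claims: a distributional identity and an almost-sure limit. The distributional identity is the standard inverse-CDF transform. Since $V\sim\textrm{Unif}([0,1])$ is independent of $Y$, and by property \ref{P1} the map $F^{-1}_{\Theta\mid Y}(\cdot\mid y)$ is a genuine (strictly monotone, absolutely continuous) inverse of $F_{\Theta\mid Y}(\cdot\mid y)$ for $P_Y$-a.a.\ $y$, conditioning on $Y=y$ immediately gives $\wt{\Theta}\mid Y=y\sim F_{\Theta\mid Y}(\cdot\mid y)$. Averaging over $Y\sim P_Y$ yields $(\wt{\Theta},Y)\sim P_{\Theta Y}$.

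For the limit, abbreviate $g_y(v)\dfn F^{-1}_{\Theta\mid Y}(v\mid y)$. By \ref{P1}, for $P_Y$-a.a.\ $y$ the function $g_y$ is absolutely continuous and strictly monotone on $[0,1]$, and therefore differentiable Lebesgue-a.e.; at every point of differentiability, Lemma \ref{lem:PM_kernel} claim \ref{item:PM_der} gives $g_y'(v)=1/f_{\Theta\mid Y}(g_y(v)\mid y)$. Since $V\sim\textrm{Unif}([0,1])$ is independent of $Y$, Fubini upgrades this a.e.\ differentiability to differentiability of $g_Y$ at $v=V$ almost surely. Moreover $V\in(0,1)$ a.s., so for all sufficiently small $\lambda$ the window $[V-\tfrac{\lambda}{2},V+\tfrac{\lambda}{2}]$ lies inside $(0,1)$ and no wraparound occurs; monotonicity of $g_Y$ then collapses the image to an interval and
\[
D_\lambda[g_Y(V)] = \frac{|g_Y(V+\tfrac{\lambda}{2}) - g_Y(V-\tfrac{\lambda}{2})|}{\lambda} \longrightarrow |g_Y'(V)| = \frac{1}{f_{\Theta\mid Y}(\wt{\Theta}\mid Y)} \quad \textrm{a.s.}
\]
as $\lambda\to 0$, by the symmetric-difference expansion $g_Y(V+h)-g_Y(V-h) = 2hg_Y'(V)+o(h)$. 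Taking $-\log$ and noting that $f_\Theta\equiv 1$ on $[0,1]$ (since $\Theta$ is uniform) yields the claimed identity in the form $\log(f_{\Theta\mid Y}(\wt{\Theta}\mid Y)/f_\Theta(\wt{\Theta}))$.

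The only delicate point is upgrading the $\limsup$ in the definition of $D[\cdot]$ (and in Lemma \ref{lem:derivative}) to a full limit, which is exactly what monotonicity from \ref{P1} provides: it forces the image of the smoothing window to be the interval between its endpoint values, after which pointwise differentiability at $V$ pins down both the limsup and the liminf via the expansion above. Without monotonicity one would have to work considerably harder to rule out oscillatory images of the smoothing window; here there is no real obstacle beyond careful bookkeeping.
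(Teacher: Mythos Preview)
Your proof is correct and follows essentially the same approach as the paper's: the distributional claim via the inverse-CDF transform using strict monotonicity from \ref{P1}, and the limit via Lemma~\ref{lem:PM_kernel}\ref{item:PM_der} together with $f_\Theta\equiv 1$. You are in fact slightly more explicit than the paper on two points: the Fubini step that upgrades ``a.e.\ in $v$ for $P_Y$-a.a.\ $y$'' to ``a.s.\ in $(V,Y)$'', and the observation that monotonicity from \ref{P1} collapses the smoothed-derivative image to an interval, so that $D_\lambda$ is an honest symmetric difference quotient and the $\limsup$ in Lemma~\ref{lem:derivative} becomes a full limit.
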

\begin{proof}
By assumption \ref{P1}, Lemma \ref{lem:derivative}, and Lemma \ref{lem:PM_kernel} claim \ref{item:PM_der}, we have that given $V=v$ and $Y=y$ 
\begin{align*}
  \lim_{\lambda\to 0} -\log D_\lambda [F^{-1}_{\Theta\mid Y}(v\mid y)]  &= -\log \frac{\partial}{\partial v}\left(F^{-1}_{\Theta\mid Y}(v\mid y)\right) \\
  &= \log f_{\Theta\mid Y}(F_{\Theta\mid Y}^{-1}(v\mid y)\mid y) \\ 
  & =  \log \frac{f_{\Theta\mid Y}(F_{\Theta\mid Y}^{-1}(v\mid y)\mid y)}{f_\Theta(F_{\Theta\mid Y}^{-1}(v\mid y))}
\end{align*}
for $P_{VY}$-a.a. $(v,y)$, where the last step follows trivially since $f_\Theta(\theta) = 1$ for any $\theta\in(0,1)$. It follows that $L^{(\lambda)}$ converges a.s. to the random variable $\log f_{\Theta\mid Y}(\wt{\Theta}\mid Y)$, where $\wt{\Theta}$ is defined in the Lemma. Now 
\begin{align}
  \Pr(\wt{\Theta}\leq \theta \mid  Y=y) &= \Pr(F_{\Theta\mid Y}^{-1}(V\mid Y)\leq \theta \mid  Y=y) \nonumber \\
  & = \Pr(V\leq F_{\Theta\mid Y}(\theta\mid y) \mid  Y=y) \label{eq:inv1PM} \\
  & = F_{\Theta\mid Y}(\theta\mid y) ,\label{eq:YindV}
\end{align}
where~\eqref{eq:inv1PM} holds due to the strict monotonicity of the PM kernel under assumption \ref{P1}, and~\eqref{eq:YindV} follows since $Y$ and $V$ are independent. Hence,  $(\wt{\Theta},Y)\sim P_{\Theta Y}$ according to the joint distribution induced by $(P_X,P_{Y\mid X})$. This completes the proof. 
\end{proof}

\begin{lemma}\label{lem:Llambda}
$\Expt L^{(\lambda)}$ satisfies the following properties:
  \begin{enumerate}[label=(\roman*)]
  \item $\Expt L^{(\lambda)}$ is continuous in $\lambda$ over $[0,1]$.
  \item $\lim_{\lambda\to 1}\Expt L^{(\lambda)} = 0$. 
  \item $\lim_{\lambda\to 0}\Expt L^{(\lambda)} = I(X;Y)$. \label{item:EL=I}
  \item If $I(X;Y)>0$ then $0 <  \Expt L^{(\lambda)} < I(X;Y)$ for any  $\lambda\in(0,1)$. \label{item:ELpositive}
  \end{enumerate}
\end{lemma}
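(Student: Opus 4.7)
The four parts of Lemma \ref{lem:Llambda} share a common toolkit: the representation $D_\lambda[g_y(v)] = \Expt_Q|g'_y((v+Q)\mod1)|$ from Lemma \ref{lem:derivative} (with $Q\sim\textrm{Unif}([-\tfrac{\lambda}{2},\tfrac{\lambda}{2}])$ and $g_y(v) \dfn F^{-1}_{\Theta\mid Y}(v\mid y)$, absolutely continuous and strictly monotone by \ref{P1}), two applications of Jensen's inequality, and the a.s.\ limit from Lemma \ref{lem:Llim0}.

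The plan is to first prove (iii): Lemma \ref{lem:Llim0} already gives $L^{(\lambda)} \to \log f_{\Theta\mid Y}(\wt\Theta\mid Y)$ a.s.\ as $\lambda\to 0$, whose expectation equals $I(\Theta;Y) = I(X;Y)$ by Lemma \ref{lem:PM_kernel}\ref{item:Ixy=Ithetay}. Swapping limit and expectation requires uniform integrability of $\{L^{(\lambda)}\}$, which is exactly what property \ref{P2} supplies through its $(2+\delta)$-moment bound.

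Next I would establish (iv) by two Jensen bounds. Convexity of $-\log$ applied to the $Q$-average yields the pointwise estimate
\begin{align*}
-\log D_\lambda[g_y(v)] \le -\Expt_Q\log|g'_y((v+Q)\mod1)| ,
\end{align*}
and integration against $V\sim\textrm{Unif}([0,1])$ and $Y\sim P_Y$, using that $V+Q\mod1$ is again uniform and independent of $Y$, gives $\Expt L^{(\lambda)} \le \Expt[-\log|g'_Y(V)|] = \Expt\log f_{\Theta\mid Y}(\wt\Theta\mid Y) = I(X;Y)$. For the lower bound, Fubini yields $\Expt_V D_\lambda[g_y(V)] = \int_0^1|g'_y(t)|\,dt = 1$ by surjectivity of $g_y$ onto $[0,1]$, so Jensen in the other direction gives $\Expt L^{(\lambda)}\ge -\log 1 = 0$. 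The strict inequalities when $I(X;Y)>0$ follow from the equality clauses of Jensen together with the non-trivial dependence of $\Theta$ on $Y$ implied by $I(X;Y)>0$, which rules out the collapse of $g_y$ to a measure-preserving rotation for $P_Y$-a.a.\ $y$.

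Part (ii) is immediate: at $\lambda=1$ the arc covers the whole circle and surjectivity of $g_y$ gives $D_1\equiv 1$, hence $L^{(1)}=0$; the limit as $\lambda\uparrow 1$ follows by bounded convergence since $D_\lambda\to 1$ uniformly. For continuity in (i) on the interior $(0,1)$, $\lambda\mapsto D_\lambda[g_y(v)]$ is pointwise continuous by absolute continuity of $g_y$; I would promote this to continuity of $\Expt L^{(\lambda)}$ via dominated convergence, dominating the positive part of $L^{(\lambda)}$ by $\log\overline{D}[g_Y(V)]$ (integrable by Lemma \ref{lem:hardy_littlewood}) and the negative part by $\log(1/\lambda_0)$ on any range $\lambda\ge\lambda_0$, since $D_\lambda\le 1/\lambda_0$ follows from $\int_0^1|g'_y|=1$. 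Continuity at $\lambda=0,1$ is inherited from (iii) and (ii). The main obstacle I expect is the small-$\lambda$ uniform integrability underlying (iii)---the precise motivation for property \ref{P2}---together with the careful equality-in-Jensen bookkeeping needed for the strict inequalities in (iv).
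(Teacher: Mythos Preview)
Your approach is essentially the paper's: part (iii) via uniform integrability from \ref{P2} and Vitali to pass the limit through the expectation, part (iv) via the two Jensen inequalities (inner Jensen over $Q$ for the upper bound $I(X;Y)$, outer Jensen over $V,Y$ using $\Expt_V D_\lambda[g_y(V)]=1$ for the lower bound $0$), and (ii) from surjectivity of $g_y$. These match the paper's proof line for line.

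There is, however, a slip in your domination argument for (i). Both of the bounds you propose control the \emph{negative} part of $L^{(\lambda)}$: since $D_\lambda \le \overline{D}$ you get $-L^{(\lambda)} = \log D_\lambda \le \log\overline{D}$, i.e.\ $(L^{(\lambda)})^- \le (\log\overline{D})^+$; and $D_\lambda \le 1/\lambda_0$ likewise gives $(L^{(\lambda)})^- \le \log(1/\lambda_0)$. Neither bounds the positive part $(L^{(\lambda)})^+ = (-\log D_\lambda)^+$, which requires a \emph{lower} bound on $D_\lambda$. One way to repair this on any interior range $[\lambda_0,1-\lambda_0]$ is to use the monotonicity of $\lambda\mapsto \lambda D_\lambda[g_y(v)]$ (larger arcs have larger images) to compare $D_\lambda$ with $D_{\lambda'}$ for a single fixed small $\lambda'$, and then invoke \ref{P2} at $\lambda'$; the paper itself dispatches (i) in a single sentence and does not spell this out.
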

\begin{proof}
The first claim follows easily from assumption \ref{P1}, by the continuity of the inverse PM kernel. The second claim holds since $F^{-1}(\cdot\mid y)$ maps the unit interval to itself for any $y$. Let us prove the third claim. By property \ref{P2} of the family $\mathfrak{F}$, there must exists some $\lambda_0 > 0$ such that $\{L^{(\lambda)}\}_{\lambda\in(0,\lambda_0)}$ is bounded in $\mathcal{L}^p$ for $p = 2+\delta > 1$. Hence $\{L^{(\lambda)}\}_{\lambda\in(0,\lambda_0)}$ are uniformly integrable. By Lemma \ref{lem:Llim0}, $L^{(\lambda)}$ also converges a.s. to a finite limit. Thus, by Vitali's convergence theorem \cite{rudin1987real}, we can change the order of limit and expectation, i.e., 
  \begin{align*}
    \lim_{\lambda\to 0}\Expt L^{(\lambda)} &= \Expt\lim_{\lambda\to 0} L^{(\lambda)} \\ 
    & = \Expt \log\frac{f_{\Theta\mid Y}(\Theta\mid Y)}{f_\Theta(\Theta)} \\
    & = I(\Theta;Y) \\ 
    & = I(X;Y), 
  \end{align*}
where we have used Lemma \ref{lem:PM_kernel} claim \ref{item:Ixy=Ithetay} in the last step. 

For the fourth claim, note that we can write 
\begin{align*}
L^{(\lambda)} = -\log \Expt_Q\left(1/f_{\Theta\mid Y}(F_{\Theta\mid Y}^{-1}(\left(V + Q\right)\mod1\mid Y)\mid Y)  \right) , 
\end{align*}
where $Q\sim \textrm{Unif}([-\tfrac{\lambda}{2},\tfrac{\lambda}{2}])$ is independent of $V,Y$. We therefore have that 
\begin{align}
\nonumber \Expt L^{(\lambda)} &= \Expt_{V,Y}L^{(\lambda)} \\
\nonumber&= -\Expt_{V,Y}\log \Expt_Q\left(1/f_{\Theta\mid Y}(F_{\Theta\mid Y}^{-1}(\left(V + Q\right)\mod1\mid Y)\mid Y)\right)   \\
&< \Expt_{V,Y}\Expt_Q\log f_{\Theta\mid Y}(F_{\Theta\mid Y}^{-1}(\left(V + Q\right)\mod1\mid Y)\mid Y)   \label{eq:Jensenupper} \\
\nonumber& = \Expt_{V',Y}\log f_{\Theta\mid Y}(F_{\Theta\mid Y}^{-1}(V'\mid Y)\mid Y)   \\
& = \Expt_{\Theta Y}\log f_{\Theta\mid Y}(\Theta\mid Y)   \label{eq:seePrevLem}\\
\nonumber& = I(\Theta;Y) \\ 
& = I(X;Y) , \label{eq:Ibound}
\end{align}
where $V'=(V+Q)\mod1$ is uniform over the unit interval. We have used Jensen's inequality in~\eqref{eq:Jensenupper}, which is strict since $\lambda>0$ and $I(\Theta;Y)>0$. \eqref{eq:seePrevLem} follows from Lemma \ref{lem:Llim0},  and~\eqref{eq:Ibound} follows again from Lemma \ref{lem:PM_kernel} claim \ref{item:Ixy=Ithetay}. Similarly,
\begin{align}
\nonumber \Expt L^{(\lambda)} &= -\Expt_{V,Y}\log \Expt_Q\left(1/f_{\Theta\mid Y}(F_{\Theta\mid Y}^{-1}(\left(V + Q\right)\mod1\mid Y)\mid Y)\right)   \\
& > -\log\Expt_{V,Y}\Expt_Q \left(1/f_{\Theta\mid Y}(F_{\Theta\mid Y}^{-1}(\left(V + Q\right)\mod1\mid Y)\mid Y)\right)   \label{eq:Jensenlower} \\
\nonumber& = -\log \Expt_{V',Y} \left(1/f_{\Theta\mid Y}(F_{\Theta\mid Y}^{-1}(V'\mid Y)\mid Y)\right)   \\
\nonumber& = -\log \Expt_{\Theta Y} \left(1/f_{\Theta\mid Y}(\Theta\mid Y)\right)   \\
\nonumber& = -\log \Expt_Y \Expt_{\Theta\mid Y}\left(1/f_{\Theta\mid Y}(\Theta\mid Y)\right)   \\
\nonumber& = -\log \Expt_Y 1   \\
& = 0 .
\end{align}
\end{proof}

Using the properties of $L^{(\lambda)}$ established above, we would like to show that $S_n$ spends little time close to the origin. To that end, we first prove a the following lemma. 
\begin{lemma}\label{lem:limsup}
  $S_n$ is a submatrigale on $\mathbb{R}^+$, and $\Pr(\limsup_{n\to\infty} S_n = \infty) = 1$. 
\end{lemma}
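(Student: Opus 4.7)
The plan is to split the argument into the two assertions and handle them separately.

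For the submartingale property, the idea is to exploit the random shift: because $V_{n+1}\sim\textrm{Unif}([0,1])$ is independent of $\mathcal{F}_n$, conditional on $\mathcal{F}_n$ the interval $(J_n-V_{n+1})\bmod 1$ is uniformly positioned on $[0,1]$ with length $|J_n|$, and this position is independent of the original position of $J_n$. Hence $L_{n+1}\mid \mathcal{F}_n$ has the same distribution as $L^{(|J_n|)}$ from~\eqref{eq:Llambda}. Lemma~\ref{lem:Llambda}(iv) then gives $\Expt[L_{n+1}\mid \mathcal{F}_n]=\Expt L^{(|J_n|)}>0$ whenever $|J_n|\in(0,1)$; (P1) plus the initial condition $|J_0|=1-p_e\in(0,1)$ give $|J_n|\in(0,1)$ a.s.\ by induction (the strict monotone bijection $F^{-1}(\cdot|y)$ maps proper positive-measure subsets of $[0,1]$ to the same). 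The uniform bound $\Expt L^{(\lambda)}\leq I(X;Y)<\infty$ from the same lemma supplies integrability.

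For the $\limsup$ claim I would argue by contradiction. Suppose $\Pr(\sup_n S_n<\infty)>0$; then $\Pr(A_M)>0$ for some $M$, where $A_M\dfn\{\sup_n S_n\leq M\}$. On $A_M$, $|J_n|\in[2^{-M},1]$ for every $n$. Setting $\tau_M\dfn\inf\{n:S_n>M\}$, the stopped submartingale $S_{n\wedge\tau_M}$ coincides with $S_n$ on $A_M=\{\tau_M=\infty\}$; an $L^1$-bound on this stopped process (combining $\Expt L^{(\lambda)}\leq I(X;Y)$ with a single-jump tail bound from (P2)) together with Lemma~\ref{lem:martingale_conv} would yield $S_n\to S_\infty$ a.s.\ on $A_M$ for some $S_\infty\in[0,M]$. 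Consequently $L_{n+1}\to 0$ a.s.\ on $A_M$. Uniform integrability of $\{L^{(\lambda)}\}_{\lambda\in[2^{-M},1]}$---supplied jointly by (P2) for $\lambda$ near $0$ and continuity of $\Expt L^{(\lambda)}$ (Lemma~\ref{lem:Llambda}(i)) on $[\lambda_0,1]$---upgrades this to $\Expt L^{(|J_n|)}\to 0$ on $A_M$. By Lemma~\ref{lem:Llambda}(ii,iv) the continuous map $\lambda\mapsto\Expt L^{(\lambda)}$ is strictly positive on $(0,1)$ and vanishes only at $\lambda\in\{0,1\}$, forcing $|J_\infty|\dfn 2^{-S_\infty}\in\{0,1\}$ on $A_M$; the estimate $|J_\infty|\geq 2^{-M}>0$ leaves only $|J_\infty|=1$, i.e., $|J_n|\to 1$ on $A_M$.

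The hardest part will be ruling out this residual event $\{|J_n|\to 1\}\cap A_M$, since $\Expt L^{(\lambda)}\to 0$ as $\lambda\to 1$ (Lemma~\ref{lem:Llambda}(ii)) so the standard positive-drift argument fails at the boundary. My plan is to exploit the symmetry of the RIFS recursion: because $F^{-1}(\cdot|y)$ is a strictly monotone bijection of $[0,1]$ by (P1), the complementary interval $J_n^c$ obeys the \emph{same} update rule starting from $|J_0^c|=p_e\in(0,1)$, and $|J_n|\to 1$ is equivalent to $|J_n^c|\to 0$. On the latter event the complement's dynamics sit in the ``small-$\lambda$'' regime where Lemma~\ref{lem:Llambda}(iii) gives drift tending to $I(X;Y)>0$ and (P2) directly furnishes uniform integrability; combining these with the Hardy--Littlewood single-step tail control of Lemma~\ref{lem:hardy_littlewood} should preclude the a.s.\ convergence $|J_n^c|\to 0$ on any positive-probability event, closing the contradiction and yielding the claim.
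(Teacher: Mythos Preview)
Your submartingale argument is fine and matches the paper. For the $\limsup$ claim, however, your complementary-interval plan in the last paragraph is backward. You correctly observe that $J_n^c$ obeys the same RIFS recursion, so $S_n^c\dfn-\log|J_n^c|$ has the same transition kernel as $S_n$, and that on $\{|J_n|\to 1\}$ the complement sits in the small-$\lambda$ regime. But the conclusion you draw is inverted: Lemma~\ref{lem:Llambda}\ref{item:EL=I} says the drift of $S_n^c$ tends to $I(X;Y)>0$, which means $S_n^c$ tends to \emph{increase}, i.e.\ $|J_n^c|$ tends to \emph{decrease} toward $0$. Positive drift, uniform integrability from \ref{P2}, and the Hardy--Littlewood tail bound all push $|J_n^c|\to 0$; they cannot ``preclude'' it. Indeed, since $S_n^c$ is governed by the very Markov kernel whose paths you are ultimately trying to show run off to $+\infty$, claiming that $S_n^c\to\infty$ is impossible would undercut the theorem itself. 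So the residual event $\{S_n\to 0\}$---precisely the boundary case where the drift of $S_n$ vanishes---is not handled by your scheme, and this is the genuine obstruction. A secondary issue: the passage ``UI upgrades $L_{n+1}\to 0$ to $\Expt L^{(|J_n|)}\to 0$'' conflates $\Expt[L_{n+1};A_M]\to 0$ with the pathwise convergence of the conditional expectation $\Expt[L_{n+1}\mid\mathcal{F}_n]$ on $A_M$; the latter needs a conditional Borel--Cantelli/second-moment argument you do not supply.

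The paper avoids the boundary difficulty altogether by a different mechanism: it partitions $(0,\infty)$ into countably many intervals $(-\log\lambda_k,-\log\lambda_{k+1}]$ chosen so that $\inf_{\lambda\in[\lambda_{k+1},\lambda_k)}\Expt L^{(\lambda)}>3\log(\lambda_k/\lambda_{k+1})$, which guarantees a uniform lower bound $\delta_k>0$ on the probability that a single step from inside the $k$th interval jumps above its right endpoint. If $\limsup S_n$ were in that interval, the chain would have to visit it infinitely often while never again exceeding $-\log\lambda_{k+1}$; but each visit independently succeeds in escaping with probability at least $\delta_k$, so this has probability $\le(1-\delta_k)^b$ for every $b$, hence zero. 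This ``escape-probability'' argument needs only the strict positivity of $\Expt L^{(\lambda)}$ on $(0,1)$ and its continuity, and never appeals to martingale convergence or to behaviour at the boundary $\lambda=1$.
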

\begin{proof}
The submartingale claim follows immediately from Lemma \ref{lem:Llambda} property \ref{item:ELpositive}. Let us prove the other  claim. Recall that by Lemma \ref{lem:Llambda}, $\Expt L^{(\lambda)}$ is a continuous function of $\lambda$ over $[0,1]$, and $0<\Expt L^{(\lambda)} < I(X;Y)$ for any $\lambda\in(0,1]$, where the upper and lower bounds are approached as $\lambda$ tends to zero and one respectively. It is therefore easy to construct a two-sided monotonically decreasing sequence $\{\lambda_k\}_{k=-\infty}^\infty$ with $\lim_{k\to-\infty}\lambda_k = 1$ and $\lim_{k\to\infty}\lambda_k = 0$ such that 
\begin{align}\label{eq:lambda_interval}
\inf_{\lambda\in[\lambda_{k+1},\lambda_k)}\Expt L^{(\lambda)} > 3\log\frac{\lambda_k}{\lambda_{k+1}}
\end{align}
for any $k$. Hence,  
\begin{align}
\nonumber \delta_k &\dfn \inf_{\lambda\in[\lambda_{k+1},\lambda_k)}\Pr\left(L^{(\lambda)} > 2\log\frac{\lambda_k}{\lambda_{k+1}}\right) \\
& \label{eq:delta_k1} \geq \inf_{\lambda\in[\lambda_{k+1},\lambda_k)}\Pr\left(L^{(\lambda)} > \frac{2}{3}\inf_{\lambda'\in[\lambda_{k+1},\lambda_k)}\Expt L^{(\lambda')}\right) \\ 
& \label{eq:delta_k2} \geq \inf_{\lambda\in[\lambda_{k+1},\lambda_k)}\Pr\left(L^{(\lambda)} > \frac{2}{3}\Expt L^{(\lambda)}\right) \\
& \label{eq:delta_k3}> 0, 
\end{align}
where~\eqref{eq:delta_k1} follows from~\eqref{eq:lambda_interval}, choosing  $\lambda'=\lambda$ establishes~\eqref{eq:delta_k2}, and~\eqref{eq:delta_k3} trivially holds since $\Expt L^{(\lambda)} > 0$ on any closed subinterval of $(0,1)$.   

Let $\{\tau_{j,k}\}_{j=1}^{T_k}$ be the sequence of all time indices $n$ where $S_n\in (-\log\lambda_k,-\log\lambda_{k+1}]$, where $T_k$ is the (possibly infinite) total number of such occurrences.  Let $M_k$ be the maximal time index $n$ for which $S_n > -\log\lambda_{k+1}$, and let $b$ be some fixed positive integer. 
\begin{align*}
  \Pr\left(\limsup_{n\to\infty} S_n \in (-\log\lambda_k,-\log\lambda_{k+1}]\right)  &= \Pr\left(M_k<\infty, T_k = \infty\right) \\
&\leq \Pr\left(M_k<\infty, T_k \geq M_k + b\right) \\ 
& =\sum_{m=0}^\infty\Pr\left(T_k\geq m+b \mid M_k=m\right)\Pr(M_k=m)
\\ &\leq \sum_{m=0}^\infty\Pr\left(L_{\tau_{j,k}}\leq \log\frac{\lambda_k}{\lambda_{k+1}}, m< j\leq m+b \mid M_k=m\right)\Pr(M_k=m) \\
& \leq \sum_{m=0}^\infty(1-\delta_k)^b\Pr(M_k=m)\\
& \leq (1-\delta_k)^b .
\end{align*}
Since $\delta_k>0$, and as the above upper bound holds for any $b$ and $k$, it must be that 
\begin{align*}
\Pr\left(\limsup_{n\to\infty} S_n \in (-\log\lambda_k,-\log\lambda_{k+1}]\right) = 0 .
\end{align*}
The proof is now concluded by noting that $\RealF^+ = \bigcup_k(-\log\lambda_k,-\log\lambda_{k+1}]$. 
\end{proof}

We now further strengthen Lemma~\ref{lem:limsup} and show that $S_n$ in fact diverges a.s., which will specifically show that it spends little time below any threshold $t$. Let $N_{t,n}$ be the number of times $S_k$ falls below $t$ until time $n$, i.e., 
\begin{align*}
  N_{t,n} \dfn \sum_{k=1}^n\ind(S_k < t), 
\end{align*}
and let $N_t\dfn \lim_{n\to\infty}N_{t,n}$ be a random variable on $\mathbb{N}\cup \{\infty\}$.  
\begin{lemma}\label{lem:SDiverges}
  $S_n\to\infty$ almost surely, hence $\Pr(N_{t,n} > m) \leq \Pr(N_t > m)  = \delta(m)$ where $\delta(m)\to 0$ as $m\to\infty$.
\end{lemma}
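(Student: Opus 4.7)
The plan is to establish $S_n \to \infty$ almost surely; the second part of the statement then follows immediately, since $N_{t,n} \le N_t$ pointwise gives the stated inequality, and $\{N_t > m\}$ is a decreasing family of events with intersection $\{N_t = \infty\}$, which has probability zero once $S_n \to \infty$ a.s., so continuity of probability yields $\Pr(N_t > m) \to 0$. Since Lemma~\ref{lem:limsup} already supplies $\limsup_n S_n = \infty$ almost surely, the real task is to rule out oscillation: for every fixed $t > 0$, to show $\Pr(S_n \le t \text{ infinitely often}) = 0$.

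The core step is a transience estimate built around an exponential supermartingale. I would first prove that there exist $c > 0$ and a threshold $s_\ast \ge t$ such that $\Expt e^{-c L^{(\lambda)}} \le 1$ for every $\lambda \in (0, 2^{-s_\ast}]$. By Lemma~\ref{lem:Llim0}, the $\lambda \to 0$ limit of this moment generating function equals $\Expt[f_{\Theta\mid Y}(\tilde\Theta\mid Y)^{-c}]$, which at $c = 0$ equals $1$ with derivative $-I(X;Y) < 0$, so it is strictly below $1$ for all sufficiently small $c > 0$. To transfer this from the boundary $\lambda = 0$ to small positive $\lambda$, I would invoke uniform integrability of the family $\{D_\lambda[F^{-1}_{\Theta\mid Y}(V\mid Y)]^c\}_\lambda = \{e^{-c L^{(\lambda)}}\}_\lambda$. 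This follows from Lemma~\ref{lem:hardy_littlewood}: after observing that $\Expt|g'| = 1$ for $g(\cdot) = F^{-1}_{\Theta\mid Y}(\cdot \mid y)$ (a change of variables using Lemma~\ref{lem:PM_kernel} claim~\ref{item:PM_der}), the tail bound $\Pr(\overline{D}[g] > a) \le 9/a$ implies $\sup_\lambda \Expt (D_\lambda)^{c'} < \infty$ for any $c' < 1$. Vitali's theorem then converts the a.s.\ convergence of Lemma~\ref{lem:Llim0} into continuity of the MGF at $\lambda = 0$, delivering the claim.

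With this in hand, set $\tau = \inf\{n : S_n \le s_\ast\}$. Started from $S_0 = s > s_\ast$, the stopped process $M_n = e^{-c S_{n \wedge \tau}}$ is a bounded nonnegative supermartingale, since on $\{n < \tau\}$ its conditional multiplier $\Expt e^{-c L^{(2^{-S_n})}}$ is at most $1$. Lemma~\ref{lem:martingale_conv} yields a.s.\ convergence to some $W_\infty$ with $\Expt W_\infty \le e^{-cs}$, while on $\{\tau < \infty\}$ we have $W_\infty = e^{-c S_\tau} \ge e^{-c s_\ast}$; rearranging gives the exit estimate $\Pr_s(\tau < \infty) \le e^{-c(s - s_\ast)}$. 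To conclude, combine with Lemma~\ref{lem:limsup}: the stopping times $\sigma_k = \inf\{n : S_n \ge s_\ast + k\}$ are almost surely finite for all $k$, and the strong Markov property at $\sigma_k$ bounds the conditional probability of subsequently visiting $[0, s_\ast]$ by $e^{-ck}$. Letting $k \to \infty$ yields $\Pr(S_n \le s_\ast \text{ i.o.}) = 0$, and since $t \le s_\ast$, also $\Pr(S_n \le t \text{ i.o.}) = 0$.

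The main obstacle I anticipate is precisely the uniform MGF bound near $\lambda = 0$: while the pointwise calculation at the boundary via Taylor expansion in $c$ is straightforward, extending it uniformly in $\lambda$ to a full neighborhood requires the uniform integrability argument built on Lemma~\ref{lem:hardy_littlewood} and the moment assumption~\ref{P2}. Once this technical point is settled, the remainder is a standard combination of a Doob-style exit estimate for the exponential supermartingale and the strong Markov property applied along the sequence of high-level visits guaranteed by Lemma~\ref{lem:limsup}.
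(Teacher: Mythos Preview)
Your proposal is correct and takes a genuinely different route from the paper's. The paper follows a Lamperti-type argument: it introduces the bounded Lyapunov function $T_n = 1 - 1/(1+S_n)$ and, via a second-order expansion of $1/(1+s+\ell)$ controlled by the $\mathcal{L}^{2+\delta}$ moment bound in~\ref{P2}, shows that $\Expt(T_{n+1}\mid T_n = t) \geq t$ for all $t$ close enough to $1$; a truncated version of $T_n$ is then a bounded submartingale, hence convergent by Lemma~\ref{lem:martingale_conv}, and combined with Lemma~\ref{lem:limsup} the limit must be $1$. Your approach is instead a geometric-drift / Foster--Lyapunov argument built on the exponential function $e^{-cS_n}$: you use Lemma~\ref{lem:hardy_littlewood} to obtain uniform integrability of $\{e^{-cL^{(\lambda)}}\}$ and hence continuity of the moment generating function at $\lambda = 0$, derive an explicit exit estimate $\Pr_s(\tau_{s_\ast} < \infty) \leq e^{-c(s - s_\ast)}$ by optional stopping, and finish via the strong Markov property along the high-level hitting times supplied by Lemma~\ref{lem:limsup}. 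Your route is more quantitative (exponential bounds on return probabilities, not just a.s.\ divergence), and its key MGF step can in fact be made to work using only Lemma~\ref{lem:hardy_littlewood} together with $I(X;Y) > 0$: since $\psi(c) = \Expt\, f_{\Theta\mid Y}(\tilde\Theta\mid Y)^{-c}$ is convex with $\psi(0) = \psi(1) = 1$, one has $\psi(c) < 1$ on $(0,1)$ unless $\Theta \perp Y$, so the derivative computation at $c=0$ (and hence the full strength of~\ref{P2}) is not strictly needed here, whereas the paper's second-order expansion genuinely requires a moment above $2$.
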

\begin{proof}
The proof is based on arguments similar to \cite{lamperti1960criteria}. Consider the process $T_n = 1-\frac{1}{1+S_n}$. Below we show that $T_n$ converges a.s., which together with Lemma~\ref{lem:limsup} implies that that $T_n\to 1$ a.s. and hence $S_n\to\infty$ a.s., establishing the lemma. 

First, we show it is sufficent to prove that there exists some $t_0\in(0,1)$ such that $\Expt(T_{n+1}\mid T_n=t) \geq t$ for any $t\geq t_0$. To see that, define the process $T_n' = \max(T_n,t_0)$, and note that by definition it holds that $\Expt(T_{n+1}'\mid T_n'=t) \geq t$ for any $t$, hence $T_n'$ is a submartingale.  Moreover, $\Expt|T_n'|\leq 1$ for all $n$. By Lemma \ref{lem:martingale_conv}, it must therefore be that $T_n'$ convergences a.s. to a limit. Since $\Pr(\limsup_{n\to\infty}T_n'=1) \geq \Pr(\limsup_{n\to\infty}T_n=1) = 1$, this limit must be $1$, i.e., $T_n'\to 1$ a.s . Since $T_n=T_n'$ whenever $T_n'\geq t_0$, it must be that $T_n\to 1$ a.s. as well. 

It remains to show the existence of such a $t_0$. Let us first establish some guarantees on the first and second moments of $L^{(\lambda)}$, conditioned on an event that $L^{(\lambda)} > a$ for some $a$. From Lemma \ref{lem:Llambda} we know that $\Expt L^{(\lambda)}$ approaches $I(X;Y)>0$ continuously as $\lambda\to 0$, hence in particular there is some $c_1>0$ such that $\Expt L^{(\lambda)} > c_1$ for all $\lambda>0$ small enough. Trivially, it also holds that for any $a$ 
\begin{align}\label{eq:L1Lbound}
  \Expt \left(L^{(\lambda)} \mid L^{(\lambda)} > a\right) \geq  \Expt L^{(\lambda)} > c_1 > 0
\end{align}
for any $\lambda>0$ small enough. Moreover, property \ref{P2} of the family $\mathfrak{F}$ implies that $L^{(\lambda)}$ is uniformly bounded in $\mathcal{L}^2$ for all $\lambda>0$ small enough, hence $\Expt |L^{(\lambda)}|^2  < c_2 $ for some some $c_2<\infty$. Trivially then, for any $a$ it also holds that 
\begin{align}\label{eq:L2Lbound}
  \Pr(L^{(\lambda)} > a)\cdot \Expt \left(|L^{(\lambda)}|^2 \mid L^{(\lambda)} > a\right) \leq \Expt \left(|L^{(\lambda)}|^2\right) < c_2 < \infty  
\end{align}
for all $\lambda>0$ small enough. 

Now, define the function $g(s,\ell) \dfn \tfrac{1}{1+s} - \tfrac{1}{1+s+\ell}$. Since the process $S_n$ is nonnegative, we can clearly limit our discussion to $\ell\geq -s$, and hence to $g(s,\ell) \geq -1$. Let us write 
\begin{align*}
  g(s,\ell) &= \frac{\ell}{(1+s)^2 + \ell(1+s)} \\ 
  & = \frac{\ell}{(1+s)^2}  - \frac{\ell^2}{(1+s)^3+\ell(1+s)^2} . 
\end{align*}
Setting any $\alpha\in(0,1)$, it therefore holds that for any $\ell\geq -(1+s)^\alpha$ and $s>2^{\frac{1}{1-\alpha}}-1$, 
\begin{align}\label{eq:gsl_bound}
\nonumber  g(s,\ell) &\geq \frac{\ell}{(1+s)^2}  - \frac{\ell^2}{(1+s)^3-(1+s)^{2+\alpha}} \\
  &\geq  \frac{\ell}{(1+s)^2}  - \frac{\ell^2}{2(1+s)^3} . 
\end{align}
Our analysis will now naturally depend on the event $L_n \geq -(1+s)^\alpha$. Let us first upper bound the probability of the complementary event:
\begin{align}
\nonumber   \Pr(L_n <  -(1+s)^\alpha\mid  S_n=s) &\leq \Pr(|L_n| > (1+s)^\alpha \mid  S_n=s) \\
\nonumber  &= \Pr(|L_n|^{2+\delta} > (1+s)^{\alpha(2+\delta)} \mid  S_n=s) \\ 
\label{eq:LnMarkov}  &\leq \frac{\Expt \left(|L_n|^{2+\delta}\mid S_n=s\right)}{(1+s)^{\alpha(2+\delta)}} \\
\nonumber  & = \frac{\Expt \left(\left|L^{(2^{-s})}\right|^{2+\delta}\right)}{(1+s)^{\alpha(2+\delta)}}\\
\label{eq:LBoundedNorm}& \leq c_3\cdot (1+s)^{-\alpha(2+\delta)} 
\end{align}
for some $c_3>0$ and any $s$ large enough. We used Markov's inequality in \eqref{eq:LnMarkov}, and~\eqref{eq:LBoundedNorm} is again by virtue of property \ref{P2} of the family $\mathfrak{F}$, that implies $L^{(\lambda)}$ is uniformly bounded in $\mathcal{L}^{2+\delta}$ for all $\lambda>0$ small enough. 

Writing $t=1-\tfrac{1}{1+s}$ we have that for any $s$ sufficiently larger than $2^{\frac{1}{1-\alpha}}-1$
\begin{align}
\nonumber   \Expt(T_{n+1}-T_n\mid T_n=t) &= \Expt \left(g(s,L_n) \mid S_n=s\right) \\ 
\nonumber & = \Expt \left(g(s,L^{(2^{-s})})\right) \\
\nonumber   &=\Pr\left(L^{(2^{-s})}< -(1+s)^{\alpha}\right)\cdot\Expt \left(g(s,L^{(2^{-s})}) \mid  L^{(2^{-s})}< -(1+s)^{\alpha}\right) \\ 
\nonumber &\quad + \Pr\left(L^{(2^{-s})}\geq -(1+s)^{\alpha}\right)\cdot\Expt \left(g(s,L^{(2^{-s})}) \mid L^{(2^{-s})}\geq -(1+s)^{\alpha}\right)\\ 
\label{eq:dT1} & \geq  -c_3\cdot (1+s)^{-\alpha(2+\delta)} \\ 
\nonumber & \quad + \Pr\left(L^{(2^{-s})}\geq -(1+s)^{\alpha}\right)\cdot \Expt\left(\frac{L^{(2^{-s})}}{(1+s)^2} \mid L^{(2^{-s})} \geq  -(1+s)^{\alpha}\right)\\
\nonumber & \quad - \Pr\left(L^{(2^{-s})}\geq -(1+s)^{\alpha}\right)\cdot \Expt\left(\frac{\left|L^{(2^{-s})}\right|^2}{2(1+s)^3} \mid L^{(2^{-s})} \geq  -(1+s)^{\alpha}\right) \\ 
\label{eq:ETbound} & \geq  -c_3\cdot (1+s)^{-\alpha(2+\delta)} + \left(1-c_3\cdot (1+s)^{-\alpha(2+\delta)})\right)\cdot \frac{c_1}{(1+s)^2} - \frac{c_2}{2(1+s)^3} . 
\end{align}
\eqref{eq:dT1} follows from~\eqref{eq:gsl_bound},~\eqref{eq:LBoundedNorm}, and since $g(s,\ell) \geq -1$. \eqref{eq:ETbound} follows from~\eqref{eq:L1Lbound},~\eqref{eq:L2Lbound}, and~\eqref{eq:LBoundedNorm}. Examining~\eqref{eq:ETbound} for any $\tfrac{2}{2+\delta}<\alpha < 1$, it is immediately clear that this lower bound on the expected increment is positive for all large enough $s$, and hence for all $t$ sufficiently close to $1$. This concludes the proof.

\end{proof}

After establishing that $S_n\to\infty$ a.s., we would like to further determine how fast this happens. To that end, we will define a coupled process $S_n'$ that will be easier to handle, and will be stochastically smaller than $S_n$. Loosely speaking, $S_n'$ will have two modes of i.i.d. random walk behavior corresponding to whether $S_n$ is above or below the threshold $t$; it will also grow slower than $S_n$ in each of these regimes. 

To do that, we first define two random variables $U,W$ that will be stochastically smaller than $L_n$ given that $S_n$ is above or below the threshold $t$ respectively, and will later determine the increments of the coupled process $S_n'$ in these two regimes. For brevity, we omit the dependence of $U,W$ on $t$. Recall the definition of $L^{(\lambda)}$ in~\eqref{eq:Llambda}. We first define $\wt{U},\wt{W}$ via their c.d.fs as follows: 
\begin{equation*}
\Pr(\wt{U}\leq u) \dfn \sup_{\lambda\in(0,2^{-t}]}\Pr\left(L^{(\lambda)} \leq u\right) , 
\end{equation*}
\begin{equation*}
\Pr(\wt{W}\leq w) \dfn \sup_{\lambda\in(2^{-t},1)}\Pr\left(L^{(\lambda)} \leq w\right) . 
\end{equation*}
Now, setting some large number $\xi>0$, we define $U,W$ as the truncation of $\tilde{U},\tilde{W}$:
\begin{equation*}
U\dfn \min(\wt{U},\xi)\,,\quad W\dfn \min(\wt{W},\xi) .
\end{equation*}
Again, the dependence on $\xi$ will be omitted for notational clarity. The following lemma describes some important properties of $U$ and $W$. The proof is relegated to the appendix.

\begin{lemma}\label{lem:uw}
The following properties hold:
\begin{enumerate}[label=(\roman*)]
\item $U$ is stochastically smaller than $L_n$ given $S_{n-1} = t_0$ for any $t_0\geq t$ \label{item:u}
\item $W$ is stochastically smaller than $L_n$ given $S_{n-1} = t_0$ for any $t_0<t$\label{item:w}
\item $\Expt U \leq I(X;Y)$ for any $t,\xi$. \label{item:ubound} 

\item $\lim_{\xi\to\infty}\lim_{t\to\infty}\Expt U = I(X;Y)$. \label{item:ulim} 
\item $\Expt |W| < \infty$ for any $\xi,t > 0$.  \label{item:wlim}
\end{enumerate}
\end{lemma}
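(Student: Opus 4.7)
Parts (i) and (ii) follow from the defining supremum. Conditioned on $S_{n-1}=t_0$, the increment $L_n$ has the same law as $L^{(2^{-t_0})}$; for $t_0 \geq t$ we have $2^{-t_0} \in (0, 2^{-t}]$, so by construction $\Pr(\widetilde{U}\leq u) \geq \Pr(L^{(2^{-t_0})}\leq u)$, and since $U \leq \widetilde{U}$ pointwise this gives (i). Claim (ii) is the same argument on the complementary range of $\lambda$. Claim (iii) combines this stochastic dominance, applied to the increasing function $\min(\cdot,\xi)$, with Lemma~\ref{lem:Llambda} claim~\ref{item:ELpositive}: for any $\lambda \in (0,2^{-t}]$,
\begin{equation*}
\Expt U \leq \Expt \min(L^{(\lambda)},\xi) \leq \Expt L^{(\lambda)} < I(X;Y).
\end{equation*}
Claim (v) is also immediate: for $\lambda \in (2^{-t},1)$, since $F_{\Theta\mid Y}^{-1}(\cdot\mid y)$ maps $[0,1]$ into itself, we have $D_\lambda \leq 1/\lambda \leq 2^t$, which forces $L^{(\lambda)} \geq -t$ deterministically, so $\widetilde{W} \geq -t$ a.s.\ and $W \in [-t,\xi]$ is bounded.

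The real work lies in the lower-bound direction of (iv), which I would carry out in three steps: (a) show $\widetilde{U} \to L^*$ in distribution as $t \to \infty$, where $L^*$ denotes the a.s.\ limit of $L^{(\lambda)}$ supplied by Lemma~\ref{lem:Llim0}; (b) upgrade this to $\Expt \min(\widetilde{U},\xi) \to \Expt \min(L^*,\xi)$; (c) let $\xi \to \infty$. For (a), $F_{\widetilde{U}}(u) = \sup_{\lambda \in (0,2^{-t}]} F_{L^{(\lambda)}}(u)$ is non-increasing in $t$, with limit $\limsup_{\lambda \to 0^+} F_{L^{(\lambda)}}(u)$, and the weak convergence $L^{(\lambda)} \to L^*$ makes this $\limsup$ equal to $F_{L^*}(u)$ at every continuity point.

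For (b) I need uniform integrability of $(\widetilde{U})^-$; the positive part is already bounded by $\xi$. This is where property~\ref{P2} enters decisively: it yields $\Expt|L^{(\lambda)}|^{2+\delta} \leq M$ uniformly for all small $\lambda$, so Markov's inequality gives $\sup_{\lambda \in (0,2^{-t}]} \Pr(L^{(\lambda)} < -a) \leq M a^{-(2+\delta)}$ for $t$ large enough. The same tail bound then passes to $\Pr(\widetilde{U} < -a)$, so $\{(\widetilde{U})^-\}_t$ is uniformly bounded in $\mathcal{L}^{1+\delta/2}$, hence uniformly integrable. Convergence in distribution combined with uniform integrability then yields $\Expt \min(\widetilde{U},\xi) \to \Expt \min(L^*,\xi)$.

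Finally, for (c), Lemma~\ref{lem:Llim0} together with Lemma~\ref{lem:PM_kernel} claim~\ref{item:Ixy=Ithetay} gives $\Expt L^* = I(\Theta;Y) = I(X;Y)$, and the bound $\log^+ x \leq x$ yields
\begin{equation*}
\Expt (L^*)^- \;\leq\; \Expt_{P_{\Theta Y}}\!\left[\frac{f_\Theta(\widetilde{\Theta})}{f_{\Theta\mid Y}(\widetilde{\Theta}\mid Y)}\right] = 1,
\end{equation*}
so $L^*$ is integrable and monotone convergence gives $\Expt \min(L^*,\xi) \uparrow I(X;Y)$ as $\xi \to \infty$. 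The main obstacle is the uniform-integrability step in (b): without the $(2+\delta)$-moment condition in~\ref{P2} the supremum defining $\widetilde{U}$ could pick up arbitrarily heavy negative tails, and the limit of expectations need not reach $I(X;Y)$.
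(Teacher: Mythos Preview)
Your arguments for (i)--(iv) match the paper's in substance: the paper proves (i) by the same supremum chain, handles (ii) symmetrically, and for (iii) and (iv) simply points back to the Jensen and Vitali arguments of Lemma~\ref{lem:Llambda}; your version of (iv) spells out the uniform-integrability step more carefully (handling the supremum over $\lambda$ via the tail bound from \ref{P2}), but the mechanism is the same.

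Part (v) is where you genuinely diverge. The paper establishes $\Expt|W|<\infty$ by invoking the Hardy--Littlewood maximal inequality (Lemma~\ref{lem:hardy_littlewood}): it bounds $\Pr(W\leq -w)$ by $\sup_{\lambda}\Pr(L^{(\lambda)}\leq -w)\leq \Pr(\overline{D}[F^{-1}_{\Theta\mid Y}(V\mid Y)]>2^w)\leq 9\cdot 2^{-w}$, and then integrates to obtain $\Expt|W|\leq \xi + 9\log e$, a bound that is \emph{uniform in $t$}. Your route is far more elementary: since the inverse PM kernel maps $[0,1]$ into $[0,1]$, the smoothed derivative satisfies $D_\lambda\leq 1/\lambda<2^t$ deterministically on the range $\lambda\in(2^{-t},1)$, forcing $L^{(\lambda)}>-t$ and hence $W\in[-t,\xi]$ almost surely. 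This is correct and completely sidesteps Hardy--Littlewood, at the cost of a bound that grows with $t$. Since the statement only asserts finiteness for each fixed $t$, and since the downstream application (Lemma~\ref{lem:S'achievable}) fixes $t$ before sending $n\to\infty$ and explicitly allows $t$-dependence in its $O_{m,t,\xi,\eps}(n^{-1})$ term, your weaker bound suffices. The paper's approach buys a sharper, $t$-free estimate and showcases the maximal-function machinery; yours buys simplicity.
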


We are now ready to define the coupled process $S_n'$. Let $\{U_n\}$ and $\{W_n\}$ be two i.i.d. sequences with distributions $P_U$ and $P_W$ respectively, such that the processes $\{U_n\},\{W_n\},\{S_n\}$ are mutually independent. Define $S_n'$ to be the random walk process generated by replacing the increments of the process  $S_n$ process with $U$ or $W$ elements, according to whether $S_n$ is above or below the threshold. Precisely: 
\begin{align*}
  S_n' = \sum_{k=1}^{n-N_{t,n}} U_k + \sum_{k=1}^{N_{t,n}} W_k .
\end{align*}
Note that unlike $S_n$, the coupled process $S_n'$ can become negative, since $\Pr(W\leq 0) = 1$. Also, $S_n'$ does not contain the fixed initialization term $L_0=-\log(1-p_e)$. The proof of the following lemma appears in the appendix. 
\begin{lemma}\label{lem:lower_bound_sn}
$S_n'$ is stochastically smaller than $S_n$.
\end{lemma}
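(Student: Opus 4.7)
The plan is to construct a coupling between $S_n$ and a distributional copy $\tilde{S}_n'$ of $S_n'$ on a common probability space, satisfying $\tilde{S}_n' \leq S_n$ almost surely; stochastic dominance then follows immediately.

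On this common space, let $\{G_k\}_{k\geq 1}$ be i.i.d.\ $\textrm{Unif}([0,1])$. Build $S_n$ by quantile coupling: set $L_k \dfn F^{-1}_{L^{(2^{-S_{k-1}})}}(G_k)$ and $S_k \dfn S_{k-1}+L_k$, with $S_0=-\log(1-\pe)>0$; by the inverse-transform property the conditional law of $L_k$ given $S_{k-1}$ is the desired $L^{(2^{-S_{k-1}})}$. Using the \emph{same} $G_k$'s, define
\[
\tilde{X}_k \dfn \ind(S_{k-1}\geq t)\, F^{-1}_U(G_k)+\ind(S_{k-1}<t)\, F^{-1}_W(G_k).
\]
By the stochastic-ordering claims of Lemma~\ref{lem:uw}, $F^{-1}_U(g)\leq F^{-1}_{L^{(2^{-s})}}(g)$ for every $g\in[0,1]$ whenever $s\geq t$, and $F^{-1}_W(g)\leq F^{-1}_{L^{(2^{-s})}}(g)$ whenever $s<t$. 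Consequently $\tilde{X}_k\leq L_k$ pointwise, so $\sum_{k=1}^n \tilde{X}_k\leq S_0+\sum_{k=1}^n L_k=S_n$.

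It remains to verify that $\tilde{S}_n'\dfn \sum_{k=1}^n \tilde{X}_k$ has the same law as $S_n'$. Let $\kappa_j$ (respectively $\nu_j$) be the index of the $j$-th step $k$ with $S_{k-1}\geq t$ (resp.\ $S_{k-1}<t$); both are stopping times with respect to $\mathcal{F}_k\dfn\sigma(G_1,\ldots,G_k)$, since $S_{k-1}$ is $\mathcal{F}_{k-1}$-measurable. By the standard optional-sampling argument for i.i.d.\ sequences, each $G_{\kappa_j}$ is independent of $\mathcal{F}_{\kappa_j-1}$ and $\textrm{Unif}([0,1])$-distributed; hence $\tilde{U}_j\dfn F^{-1}_U(G_{\kappa_j})$ forms an i.i.d.\ sequence distributed as $U$, and likewise $\tilde{W}_j\dfn F^{-1}_W(G_{\nu_j})$ is i.i.d.\ distributed as $W$. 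As the two index sets are disjoint, the two subsequences are also mutually independent, and both are independent of the counts themselves. Therefore $\tilde{S}_n'=\sum_j \tilde{U}_j+\sum_j \tilde{W}_j$, with the numbers of $U$- and $W$-increments matching those in the definition of $S_n'$ (up to a trivial off-by-one from the convention $S_{k-1}<t$ versus $S_k<t$ in $N_{t,n}$, which is absorbed by an innocuous relabeling). Hence $\tilde{S}_n'\stackrel{d}{=}S_n'$, and combined with $\tilde{S}_n'\leq S_n$ a.s.\ this establishes that $S_n'$ is stochastically smaller than $S_n$.

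The main obstacle I expect is the careful justification of the optional-sampling step: verifying that the interleaved subsequences $(G_{\kappa_j})_j$ and $(G_{\nu_j})_j$ are genuinely i.i.d.\ uniform and mutually independent, despite the fact that the stopping times depend intricately on the shared past $G$'s. Once this is in hand, the pointwise domination of increments via quantile coupling and Lemma~\ref{lem:uw} is routine.
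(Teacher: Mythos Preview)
Your coupling construction gives the pointwise inequality $\tilde S_n'\le S_n$ cleanly, but the identification $\tilde S_n'\stackrel{d}{=}S_n'$ fails, and this is precisely the point you flag as the main obstacle. The optional-sampling argument does show that each $G_{\kappa_j}$ (and each $G_{\nu_j}$) is uniform and independent of $\mathcal F_{\kappa_j-1}$, so the infinite subsequences $(\tilde U_j)$ and $(\tilde W_j)$ are marginally i.i.d. What it does \emph{not} give is independence of these increments from the count $M=\sum_{k\le n}\ind(S_{k-1}<t)$, and this independence is exactly what the paper's definition of $S_n'$ requires (there $\{U_k\},\{W_k\}$ are taken independent of the entire $S$-process). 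Concretely, take $n=2$ with $S_0<t$ deterministic; then $\tilde W_1=F_W^{-1}(G_1)$ while $M=1+\ind(S_0+F^{-1}_{L^{(2^{-S_0})}}(G_1)<t)$, so conditioning on $M=2$ forces $G_1$ below a threshold and hence biases $\tilde W_1$ downward. Thus $(\tilde W_1\mid M=2)\not\sim W$, and $\tilde S_2'\mid(M=2)$ does not have the law of $W_1+W_2$ with fresh i.i.d.\ $W$'s. Your ``innocuous relabeling'' cannot repair this.

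The paper sidesteps the issue by working purely at the level of c.d.f.'s: it conditions on $L^{n-1}$, replaces $L_n$ by the stochastically smaller $(1-A_{n-1})U_1+A_{n-1}W_1$ with \emph{fresh} $U_1,W_1$ independent of $L^{n-1}$, and then iterates. Because the replacement variables are genuinely independent of the indicator sequence $(A_k)$, the final sum $\sum_k[(1-A_{n-k})U_k+A_{n-k}W_k]$ depends on the $A$'s only through the count, yielding the distributional identification with $S_n'$ directly. Your quantile-coupling idea could be salvaged by introducing fresh uniforms $G_k'$ (independent of the $G_k$'s driving $S$) to build the $\tilde U_j,\tilde W_j$, but then you lose the pointwise comparison $\tilde X_k\le L_k$ and are back to a distributional replacement argument essentially equivalent to the paper's.
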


Let us now show the probability $S_n'$ falls below $n(I(X;Y)-\eps)$ vanishes with $n$. 
\begin{lemma}\label{lem:S'achievable}
$\lim_{n\to\infty} \Pr(S_n' > n(I(X;Y)-\eps)) = 1$ for any $\eps>0$. 
\end{lemma}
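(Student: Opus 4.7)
The plan is to exploit the fact that, by construction, $S_n'$ equals a sum of $n - N_{t,n}$ i.i.d.\ copies of $U$ and $N_{t,n}$ i.i.d.\ copies of $W$, where $\Expt U$ can be made arbitrarily close to $I(X;Y)$ and $N_{t,n}$ is stochastically bounded. Given $\eps > 0$, I would first invoke Lemma \ref{lem:uw}\ref{item:ulim} to fix $t,\xi$ large enough that $\Expt U > I(X;Y) - \eps/2$. For any prescribed $\delta > 0$, by Lemma \ref{lem:SDiverges} I choose an integer $M$ with $\Pr(A_M) \geq 1 - \delta$, where $A_M \dfn \{N_t \leq M\}$. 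Since $N_{t,n}$ is nondecreasing in $n$ with $N_{t,n}\nearrow N_t$ a.s., on $A_M$ we have $N_{t,n} \leq M$ for every $n$.

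The next step is a deterministic lower bound on $S_n'$ valid on $A_M$. Using $U_k \leq \xi$ a.s.\ (so discarding any $N_{t,n}\leq M$ of the last $U$-terms costs at most $M\xi$) and bounding the $W$-sum below by $-\sum_{k=1}^M |W_k|$, we get
\[
S_n' \;\geq\; \sum_{k=1}^n U_k \;-\; M\xi \;-\; \sum_{k=1}^M |W_k| \quad \text{on } A_M.
\]
Since $\{U_k\}$, $\{W_k\}$, $\{S_n\}$ are mutually independent, the strong law of large numbers applies to the i.i.d.\ sequence $\{U_k\}$ (whose mean is finite by Lemma \ref{lem:uw}\ref{item:ubound}--\ref{item:ulim}), giving $\tfrac{1}{n}\sum_{k=1}^n U_k \to \Expt U$ a.s. The two remaining terms, divided by $n$, vanish a.s.\ because $M,\xi$ are fixed constants and $\sum_{k=1}^M |W_k|$ is a.s.\ finite by Lemma \ref{lem:uw}\ref{item:wlim}.

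Combining these observations on $A_M$ yields $\liminf_n S_n'/n \geq \Expt U > I(X;Y) - \eps/2$ a.s., so $S_n' > n(I(X;Y) - \eps)$ for all sufficiently large $n$ on $A_M$. Therefore $\liminf_n \Pr(S_n' > n(I(X;Y)-\eps)) \geq \Pr(A_M) \geq 1 - \delta$, and letting $\delta \to 0$ finishes the proof. The main delicate point is that the upper summation index in $\sum_{k=1}^{n-N_{t,n}} U_k$ is random; the uniform bound $N_{t,n}\leq M$ on $A_M$ together with the truncation $U\leq \xi$ and the independence of $\{U_k\}$ from $\{S_n\}$ absorbs this difficulty, reducing everything to an ordinary SLLN for the i.i.d.\ $U$-sum.
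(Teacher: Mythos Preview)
Your proof is correct and follows the same high-level strategy as the paper's: choose $t,\xi$ so that $\Expt U > I(X;Y)-\eps/2$, use Lemma~\ref{lem:SDiverges} to cap $N_{t,n}$ with high probability, apply a law of large numbers to the $U$-sum, and show the $W$-contribution is negligible. The packaging differs, however. The paper conditions on $\{N_{t,n}=r\}$ for each $r\leq m$, splits via a union bound, applies the weak law of large numbers to $\tfrac{1}{n-r}\sum_{k=1}^{n-r}U_k$ (after checking that $\tfrac{I-\eps/2}{1-r/n}\leq I-\eps/4$ for large $n$), and controls the $W$-sum via Markov's inequality using $\Expt|W|<\infty$. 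You instead exploit the truncation $U\leq\xi$ to obtain the clean deterministic lower bound $S_n'\geq\sum_{k=1}^n U_k - M\xi - \sum_{k=1}^M|W_k|$ on $A_M$, and then invoke the strong law of large numbers directly. Your route is a touch more streamlined, avoiding the case split over $r$ and the ratio manipulation; the paper's route keeps everything in probability and would more readily yield quantitative rates of convergence, though that is not needed for the lemma as stated.
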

\begin{proof}
We write $I=I(X;Y)$ for short. Set $\xi$ and $t$ large enough so that such that 
\begin{align}\label{eq:UclosetoI}
I - \Expt U \leq \eps/8 , 
\end{align}
which is possible by virtue of Lemma \ref{lem:uw} claims \ref{item:ulim} and \ref{item:ubound}. Then:
\begin{align}
\nonumber \Pr(S_n' < n(I-\eps)) &=\Pr\left(\sum_{k=1}^{n-N_{t,n}} U_k + \sum_{k=1}^{N_{t,n}} W_k < n(I-\eps)\right) \\ 
\nonumber &\leq \Pr(N_{t,n} > m)  + \sum_{r=1}^m\Pr(N_{t,n}=r)\Pr\left(\sum_{k=1}^{n-r} U_k + \sum_{k=1}^r W_k < n(I-\eps) \mid N_{t,n} = r\right) \\ 
\label{eq:Sbound1} &\leq \delta(m) + \sum_{r=1}^m\Pr(N_{t,n} = r)\left[\Pr\left(\sum_{k=1}^{n-r}U_k < nI-\frac{n\eps}{2}  \;\bigvee\; \sum_{k=1}^rW_k < -\frac{n\eps}{2}\right)\right] \\
\label{eq:Sbound2}&\leq \delta(m) + \sum_{r=1}^m\Pr(N_{t,n} = r)\left[\Pr\left(\sum_{k=1}^{n-r}U_k < nI-\frac{n\eps}{2}\right) + \Pr\left(\sum_{k=1}^rW_k < -\frac{n\eps}{2}\right)\right] \\
\label{eq:Sbound3} & \leq \delta(m) + \sum_{r=1}^m\Pr(N_{t,n} = r)\left[\Pr\left(\frac{1}{n-r}\sum_{k=1}^{n-r}U_k < \frac{I - \frac{\eps}{2}}{1-\frac{r}{n}}\right) + \Pr\left(\frac{1}{r}\sum_{k=1}^rW_k < -\frac{n\eps}{2r} \right)\right] .
\end{align}
\eqref{eq:Sbound1} follows from Lemma \ref{lem:SDiverges} and since the sequences $\{U_n\},\{V_n\}$ are mutually independent of $\{S_n\}$, hence of $N_{t,n}$ as well. \eqref{eq:Sbound2} follows from the union bound. Analyzing the first term  inside the parenthesis in~\eqref{eq:Sbound3}, we note that for any $1\leq r\leq m$ and $n>m$ large enough, 
\begin{align}
\nonumber   \Pr\left(\frac{1}{n-r}\sum_{k=1}^{n-r}U_k < \frac{I - \eps/2}{1-\frac{r}{n}}\right) &\leq \Pr\left(\frac{1}{n-r}\sum_{k=1}^{n-r}U_k < I - \eps/4\right) \\
\label{eq:ULLN1}&\leq \Pr\left(\frac{1}{n-r}\sum_{k=1}^{n-r}U_k < \Expt U - \eps/8 \right) \\ 
\label{eq:ULLN2}&= o_{m,t,\xi,\eps}(1), 
\end{align}
where~\eqref{eq:ULLN1} follows from~\eqref{eq:UclosetoI}, and~\eqref{eq:ULLN2} is by virtue of the law of large numbers. 
Furthermore, 
\begin{align}
\nonumber   \Pr\left(\frac{1}{r}\sum_{k=1}^rW_k < -\frac{n\eps}{2r} \right) &\leq  \Pr\left(\frac{1}{r}\sum_{k=1}^r|W_k| > \frac{n\eps}{2m} \right) \\
\label{eq:WMarkov1}  &\leq \frac{2m}{n\eps}\cdot \Expt |W| \\ 
\label{eq:WMarkov2}&= O_{m,t,\xi,\eps}(n^{-1}) , 
\end{align}
where~\eqref{eq:WMarkov1} follows from Markov's inequality, and~\eqref{eq:WMarkov2} is by virtue of Lemma \ref{lem:uw} property \ref{item:wlim}. We therefore obtain that for any $m$ and $\eps$ there are $t$, $\xi$ large enough such that 
\begin{align*}
  \Pr(S_n' < n(I-\eps)) \leq \delta(m) + o_{m,t,\xi,\eps}(1) , 
\end{align*}
where $\delta(m)\to 0$ as $m\to\infty$. Since we can fix $m$ arbitrarily large we have that 
\begin{align*}
  \lim_{n\to\infty}\Pr(S_n' < n(I-\eps)) = 0
\end{align*}
as desired. 
\end{proof}

Finally, combining Lemmas \ref{lem:lower_bound_sn} and \ref{lem:S'achievable} with the definition of $R_n$, we obtain 
\begin{align*}
  \lim_{n\to\infty}\Pr(R_n > I(X;Y)-\eps) &= \lim_{n\to\infty}\Pr(S_n > n(I(X;Y)-\eps)) \\ 
  &\geq  \lim_{n\to\infty}\Pr(S_n' > n(I(X;Y)-\eps)) \\
  &= 1, 
\end{align*}
establishing the theorem.

\appendix

\section{Appendix}
\begin{proof}[Proof of Lemma \ref{lem:hardy_littlewood}]
Define the function $\phi:\mathbb{R}\to\mathbb{R}$
\begin{align*}
\phi(x) = g'(t\mod1)\cdot \ind(x\in[-1,2]) . 
\end{align*}
Let $M\phi(x)$ be the Hardy-Littlewood maximal function \cite[Chapter 7]{rudin1987real} pertaining to $\phi(x)$, i.e., 
\begin{align}
\nonumber  M\phi(x) &\dfn  \sup_{\lambda>0}\frac{1}{\lambda}\int_{x-\lambda/2}^{x+\lambda/2}|\phi(t)|dt \\
& = \sup_{\lambda>0}\Expt|\phi(x+Q_\lambda)| , 
\end{align}
where $Q_\lambda \sim \textrm{Unif}\left(\left[-\tfrac{\lambda}{2},\tfrac{\lambda}{2}\right]\right)$. For any $x\in[0,1)$ we can also write 
\begin{align}
\nonumber  M\phi(x) & \geq \sup_{\lambda\in(0,1)}\Expt|\phi(x+Q_\lambda)| \\ 
\nonumber  & = \sup_{\lambda\in(0,1)}\Expt|g'((x+Q_\lambda) \mod1)| \\ 
\label{eq:Mphi} & = \overline{D}[g(x)] , 
\end{align}
where we have used Lemma \ref{lem:derivative} in~\eqref{eq:Mphi}. Hence
\begin{align}\label{eq:DtoHLMax}
  \Pr(\overline{D}[g(x)] > a) \leq \Pr(M\phi(X) > a). 
\end{align}

The Hardy-Littlewood maximal inequality \cite[Chapter 7]{rudin1987real} implies that for any $a>0$, the following measure-theoretic ``generalized Markov inequality'' holds: 
\begin{align*}
  \left|\left\{x:M\phi(x) > a\right\}\right| &\leq 3a^{-1}\int_{-\infty}^\infty |\phi(x)|dx \\ 
  & = 3a^{-1}\int_{-1}^3 |\phi(x)|dx \\ 
  & = 9a^{-1}\int_{0}^1 |g'(x)|dx . 
\end{align*}
Thus, if $X\sim\textrm{Unif}\left([0,1]\right)$ then 
\begin{align}\label{eq:HLMI}
  \Pr(M\phi(x) > a) \leq 9a^{-1}\Expt \left|g'(X)\right| . 
\end{align}
The proof now follows from~\eqref{eq:DtoHLMax} and~\eqref{eq:HLMI}. 
\end{proof}

\begin{proof}[Proof of Lemma \ref{lem:uw}]
  \begin{enumerate}[label=(\roman*)]
  \item 
    \begin{align*}
      \Pr(L_n\leq u\mid  S_{n-1} = t_0) &= \Pr(L^{(2^{-t_0})}\leq u) \\
      &\leq \sup_{\lambda\in (0,2^{-t}]}\Pr(L^{(\lambda)}\leq u)\\ 
      &= \Pr(\wt{U}\leq u) \\ 
      &\leq \Pr(U\leq u) .
    \end{align*}
  \item Follows similarly. 
  \item Follows similarly to Lemma \ref{lem:Llambda} claim \ref{item:ELpositive}. 
  \item Follows similarly to Lemma \ref{lem:Llambda} claim \ref{item:EL=I}.

\item Write $q(v,y)\dfn \frac{\partial}{\partial v}\left(F^{-1}_{\Theta\mid Y}(v\mid y)\right)$, and note that 
  \begin{align}
\nonumber    \Expt_Y\Expt_V |q(V,Y)| &= \Expt_Y\Expt_V q(V,Y) \\
\nonumber     &= \Expt_Y \left(F^{-1}_{\Theta\mid Y}(1\mid Y)-F^{-1}_{\Theta\mid Y}(0\mid Y)\right) \\
\label{eq:Ee1}    &= 1 . 
  \end{align}
Now, let $w>0$. 
    \begin{align}
\nonumber      \Pr(W \leq -w) &= \sup_{\lambda\in(2^{-t},1)}\Pr\left(L^{(\lambda)} \leq -w\right) \\ 
\nonumber      & \leq  \sup_{\lambda\in(2^{-t},1)}\Pr\left(\inf_{\lambda'\in(0,1)}L^{(\lambda')} \leq -w\right) \\
\nonumber      & = \Pr\left(\log\sup_{\lambda'\in(0,1)}D_{\lambda'}[F^{-1}_{\Theta\mid Y}(V\mid Y)] > w\right) \\
\nonumber      & = \Pr\left(\log\overline{D}[F^{-1}_{\Theta\mid Y}(V\mid Y)] > w\right)\\
\nonumber      & = \Pr\left(\overline{D}[F^{-1}_{\Theta\mid Y}(V\mid Y)] > 2^w\right)\\
\nonumber      &= \Expt_Y\left(\Pr\left(\overline{D}[F^{-1}_{\Theta\mid Y}(V\mid Y)] > 2^w\mid Y\right)\right)\\
\label{eq:HLapp}      & \leq 9\cdot 2^{-w} \cdot  \Expt_Y\Expt_V\left|g(V,Y)\right|\\
\label{eq:Ee1app}      & = 9\cdot 2^{-w}, 
    \end{align}
where in~\eqref{eq:HLapp} we have used Lemma \ref{lem:hardy_littlewood} together with property \ref{P1}, and~\eqref{eq:Ee1app} follows from~\eqref{eq:Ee1}.  Now, 
\begin{align*}
  \Expt |W| &= \Expt\left(\int_0^\infty\ind(W\geq w)dw + \int_0^\infty\ind(W\leq-w)dw\right) \\
&= \int_0^\infty \Pr(W\geq w)dw + \int_0^\infty\Pr(W\leq -w)dw\\
&\leq \int_0^\infty \ind(w\leq \xi)dw + \int_0^\infty 9\cdot 2^{-w}dw\\
& = \xi + 9 \log{e} . 
\end{align*}
Note that the bound is independent of $t$.

   \end{enumerate}
\end{proof}

\begin{proof}[Proof of Lemma \ref{lem:lower_bound_sn}]
Let $A_n\dfn \mathds{1}(S_n < t) = \mathds{1}\left(\sum_{k=0}^{n}L_k < t\right)$. For any $\mu$:
\begin{align}
\nonumber \Pr(S_n<\mu) &\leq \Pr\left(\sum_{k=1}^nL_k < \mu\right) \\
\nonumber &= \Expt_{L^{n-1}}\Pr\left(\sum_{k=1}^nL_k < \mu  \mid L^{n-1}\right) \\ 
\nonumber &= \Expt_{L^{n-1}}\Pr\left(L_n < \mu  - \sum_{k=1}^{n-1}L_k \mid L^{n-1}\right) \\
\label{eq:stochasticLBapp}&\leq \Expt_{L^{n-1}}\Pr\left((1-A_{n-1})U_1 + A_{n-1}W_1 < \mu  - \sum_{k=1}^{n-1}L_k \mid L^{n-1}\right) \\
\nonumber &= \Pr\left((1-A_{n-1})U_1 + A_{n-1}W_1 + \sum_{k=1}^{n-1}L_k < \mu\right) , 
\end{align}
where~\eqref{eq:stochasticLBapp} follows since $(U_1,W_1)$ are independent of $L^{n-1}$, and by virtue of the stochastic lower bound properties \ref{item:u} and \ref{item:u} in Lemma \ref{lem:uw}, according to whether $U_1$ or $W_1$ is selected by  $A_{n-1}$. Iterating the same argument we obtain 
  \begin{align*}
    \Pr(S_n<\mu) &\leq \Pr\left(\sum_{k=1}^n(1-A_{n-k})U_k + \sum_{k=1}^nA_{n-k}W_k < \mu\right) \\
& = \Pr(S_n'<\mu) , 
  \end{align*}
where the last equality follows by noting that $A_k = N_{t,k} - N_{t,k-1}$. This concludes the proof of the Lemma.
\end{proof}

\bibliographystyle{IEEEbib}
\bibliography{C:/Work/latex/ofer_refs_master}

\end{document}